\newtheorem{theorem}{Theorem}
\newtheorem{corollary}[theorem]{Corollary}
\newtheorem{lemma}[theorem]{Lemma}
\newtheorem{proposition}[theorem]{Proposition}
\theoremstyle{definition}
\newcommand{\CC}{\mathbb{C}}			
\newcommand{\SU}{\mathbf{SU}}			
\newcommand{\GL}{\mathbf{GL}}			
\newcommand{\norm}[1]
	{\left\lVert#1\right\rVert}
\newcommand{\words}[1]{\langle #1 \rangle}	
\newcommand{\Cl}{\mathbf{Cl}}
\renewcommand{\M}{\mathcal{M}}
\renewcommand{\K}{\mathcal{K}}
\newcommand{\tr}{\mathrm{tr}}
\newcommand{\Haar}{\mathrm{Haar}}
\newcommand{\SWAP}{\mathrm{SWAP}}
\newcommand{\defeq}{\coloneqq}			
\newcommand{\thmref}[1]{Theorem~\ref{#1}}	
\newcommand{\corref}[1]{Corollary~\ref{#1}}	
\newcommand{\appendref}[1]{Appendix~\ref{#1}}
\DeclareMathOperator{\Hom}{Hom}
\begin{document}

\title{Bounds on eventually universal quantum gate sets}

\author[Chaitanya Karamchedu]{Chaitanya Karamchedu$^\dagger$}
\address{Department of Computer Science, University of Maryland}
\email{cdkaram@umd.edu}

\author[Matthew Fox]{Matthew Fox$^\dagger$}
\address{Department of Physics, University of Colorado Boulder}
\email{matthew.fox@colorado.edu}

\thanks{$^\dagger$ These authors contributed equally.}

\author[Daniel Gottesman]{Daniel Gottesman}
\address{Department of Computer Science, University of Maryland}
\email{dgottesm@umd.edu}

\begin{abstract}
Say a collection of $n$-qu$d$it gates $\Gamma$ is \emph{eventually universal} if and only if there exists $N_0 \geq n$ such that for all $N \geq N_0$, one can approximate any $N$-qu$d$it unitary to arbitrary precision by a circuit over $\Gamma$. In this work, we improve the best known upper bound on the smallest $N_0$ with the above property. Our new bound is roughly $d^4n$, where $d$ is the local dimension (the `$d$' in qu$d$it), whereas the previous bound was roughly $d^8n$. For qubits ($d = 2$), our result implies that if an $n$-qubit gate set is eventually universal, then it will exhibit universality when acting on a $16n$ qubit system, as opposed to the previous bound of a $256n$ qubit system. In other words, if adding just $15n$ ancillary qubits to a quantum system (as opposed to the previous bound of $255 n$ ancillary qubits) does not boost a gate set to universality, then no number of ancillary qubits ever will. Our proof relies on the invariants of finite linear groups as well as a classification result for all finite groups that are unitary $2$-designs.
\end{abstract}

\maketitle

\section{Introduction}

Let $\Gamma$ be a finite subset of the special unitary group $\SU(d^n)$, where $d,n \geq 2$. We adopt a quantum computing perspective and think of $\Gamma$ as an \emph{$n$-qu$d$it gate set} so that each element of $\Gamma$ acts on an $n$-qu$d$it system whose Hilbert space is $(\CC^{d})^{\otimes n} \cong \CC^{d^n}$. We say $\Gamma$ is \emph{universal} if and only if $\Gamma$ generates a dense subset of $\SU(d^n)$ with respect to the operator norm topology. In the circuit model of quantum computation, universal gate sets play the role of the AND, OR, and NOT gates (or any other functionally complete set of Boolean logic gates) in the circuit model of classical computation. 

Interestingly, in the quantum setting a gate set $\Gamma$ need not be universal in the above sense to perform universal quantum computation (possibly in an encoded subspace \cite{JM08}). For example, $\{H, \mathrm{TOFFOLI}\}$ is not universal, but circuits over these gates can nevertheless simulate any quantum computation \cite{Shi02}. On the other hand, there exist non-universal gate sets that are classically simulable (e.g., Clifford \cite{Got98}) as well as other gate sets whose computational power is expected to lie somewhere ``in between'' the complexity classes $\BPP$ and $\BQP$ \cite{BJS11, TD04, KFG24}. Ultimately, there are many different types of non-universal gate sets, and, as stressed in \cite{AGS17}, it is both a natural and theoretically important goal to understand all the ways in which a gate set can fail to be universal. Incidentally, this goal is similar to Post's classification of all the ways in which a set of Boolean logic gates can fail to be universal \cite{Pos41}.

One reason why this goal is so challenging is because a gate set can be non-universal, despite the fact that a higher-dimensional version of it \emph{is} universal. For example, Jeandel identified a simple $6$-qubit ($d = 2$) gate set that does not densely generate $\SU(2^6)$, but which does densely generate $\SU(2^9)$ when allowed to act on a $9$-qubit system \cite{Jea04}. In fact, Jeandel's construction generalizes to $n$-qubit gate sets, and it establishes the existence of gate sets that are non-universal on fewer than $2n - 5$ qubits, but are universal on $2n - 3$ qubits. We review his construction in \appendref{appendix:jeandel}.

In this work, we are interested in this Jeandel-type of universality---hereafter called \emph{eventual universality}---in which an $n$-qu$d$it gate set is non-universal on an $n$-qu$d$it system, but is universal on an $N$-qu$d$it system for some $N \geq n$. In particular, our work builds on a paper by Ivanyos who considered the question of whether eventual universality is decidable \cite{Iva06}. Indeed, a priori, one does not know how many additional qu$d$its are needed before a given gate set might exhibit universality, so it is not clear if eventual universality is even decidable. Remarkably, however, Ivanyos proved that eventual universality is decidable. To achieve this, he bounded the number of ancillary qu$d$its one would need to add to a system before a given gate set acting on that system would exhibit universality. Specifically, he showed that an $n$-qu$d$it gate set is eventually universal if and only if it is universal on a larger, $N$ qu$d$it system, where $N \leq d^8(n-1) + 1$. 

Our main result is a significant improvement to this bound, thus improving Ivanyos' algorithm for deciding eventual universality. Our new bound is essentially a quadratic improvement and is roughly $d^4n$. For qubits, our result implies that if an $n$-qubit gate set is eventually universal, then it will exhibit universality when acting on a $16n$ qubit system, as opposed to the previous bound of a $256n$ qubit system. In other words, if adding just $15n$ ancillary qubits to a quantum system (as opposed to the previous bound of $255 n$ ancillary qubits) does not boost a gate set to universality, then no number of ancillary qubits ever will.

Our method of proof is similar to Ivanyos' and hinges significantly on the invariants of finite linear groups as well as a classification result for all finite groups that are unitary $2$-designs. However, in an effort to make this article comprehensible to the quantum computing community, we have deferred most of the technical details to the appendices.

\section{Preliminaries}

Let $\Gamma$ be an $n$-qu$d$it gate set, where $d,n \geq 2$. As mentioned in the introduction, $\Gamma$ is \emph{universal} if and only if $\Gamma$ generates a dense subset of $\SU(d^n)$ with respect to the operator norm topology. In general, $\Gamma$ is not closed under inverses, so the set it generates is merely a \emph{semigroup} in $\SU(d^n)$. However, since $\SU(d^n)$ is compact, the semigroup generated by $\Gamma$ is dense in $\SU(d^n)$ if and only if the group generated by $\Gamma$ and its inverse elements is dense in $\SU(d^n)$. For this reason, we will always assume that $\Gamma$ is closed under inverses so that $\Gamma$ generates a sub\emph{group} of $\SU(d^n)$. 

Here, we are interested in a weaker notion of universality that we call \emph{eventual universality}. Informally, this is the idea that, while $\Gamma$ itself may not be universal, a higher-dimensional variant of $\Gamma$ is. To be more precise, let $N \geq n$, let $I$ be the identity on $(\CC^{d})^{\otimes N - n}$, and let 
$$
\Gamma^N \defeq \left\{\pi (\gamma \otimes I) \pi^{-1} : \gamma \in \Gamma, \pi \in S_N\right\},
$$
where $S_N$ is the symmetric group of degree $N$. In words, $\Gamma^N$ is the set of all $N$-qu$d$it unitaries that can be made from a single element of $\Gamma$ acting on any subset of $n$ qu$d$its (in any order), with the identity acting on the remaining $N - n$ qu$d$its. As shown in \cite{Iva06}, $\Gamma^N$ is equivalently the set of all $N$-qu$d$it unitaries that can be made from a single element of $\Gamma$ and any number of $\SWAP$ gates. Given this, we say $\Gamma$ is \emph{eventually universal} if and only if there exists $N \geq n$ such that $\Gamma^N$ is universal, and we write $\K(\Gamma)$ for the smallest $N \geq n$ such that $\Gamma^N$ is universal. In case $\Gamma^N$ is not universal for all $N \geq n$, we set $\K(\Gamma) = \infty$. Thus, $\Gamma$ is eventually universal if and only if $\K(\Gamma) < \infty$.

Evidently, if $\Gamma$ is universal, then it is eventually universal. Moreover, it is known that if $\Gamma^N$ is universal, then $\Gamma^M$ is universal for all $M \geq N$ \cite{DiV95, Iva06}. However, if $\Gamma$ is eventually universal, then it is not necessarily universal. In other words, there exist $n$-qu$d$it gate sets $\Gamma$ which \textit{are} eventually universal, but for which $\K(\Gamma) > n$. Examples of such gate sets include Jeandel's construction in \cite{Jea04}, which we review in \appendref{appendix:jeandel}.

In this paper, we are interested in upper bounding $\K(\Gamma)$. Such a bound gives the maximum number of ancillary qu$d$its one would need to add to a quantum system before an eventually universal gate set $\Gamma$ exhibits universality. The first and only upper bound (as far as we know) is due to Ivanyos \cite{Iva06}, who proved that an $n$-qu$d$it gate set $\Gamma$ is eventually universal if and only if $\K(\Gamma) \leq d^8(n-1) + 1$. Here, we improve this result to roughly $d^4n$. Formally, our main result is as follows.

\begin{theorem}
\label{thm:mainthm}
    Let $\Gamma$ be an $n$-qu$d$it gate set, where $d,n \geq 2$. Then, $\Gamma$ is eventually universal if and only if $\K(\Gamma) \leq d^4(n - 1) + 1$.
\end{theorem}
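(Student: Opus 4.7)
The plan is to prove the nontrivial direction by contrapositive: assuming $\Gamma^{N_0}$ is not universal for $N_0 := d^4(n-1)+1$, I will show that $\Gamma^M$ is not universal for any $M \geq n$. Combined with the monotonicity property noted in the preliminaries---that $\Gamma^N$ universal implies $\Gamma^{N+1}$ universal---this yields $\K(\Gamma) = \infty$, and hence the theorem in its stated form.

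Set $G_M := \overline{\langle \Gamma^M \rangle} \subseteq \SU(d^M)$ and note that $G_M$ always contains the copy of $S_M$ permuting the $M$ qudits, because $\Gamma^M$ is generated by $\Gamma$ together with all $\SWAP$ gates. Universality is characterized by a commutant condition: $G_M = \SU(d^M)$ if and only if $G_M$ acts irreducibly on $V_M := (\CC^d)^{\otimes M}$ and the centralizer of its diagonal action on $V_M \otimes V_M$ is $2$-dimensional (spanned by $I$ and $\SWAP$). Thus non-universality of $G_{N_0}$ yields either (i) a proper $G_{N_0}$-invariant subspace of $V_{N_0}$, or (ii) a nontrivial operator $W \in \mathrm{End}(V_{N_0} \otimes V_{N_0})$ commuting with every $g \otimes g$ for $g \in G_{N_0}$ but lying outside $\mathrm{span}(I, \SWAP)$. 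I refer to either object as an \emph{obstruction witness}; by averaging with respect to $S_{N_0} \subseteq G_{N_0}$, I may and shall assume the witness is permutation-symmetric.

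The core of the proof is then to \emph{lift} an obstruction witness at level $N_0$ to one at every level $M > N_0$, thereby certifying non-universality of $G_M$. The natural candidate lift tensors the $N_0$-qudit witness with the identity on the extra $M - N_0$ qudits (plus symmetrization over $S_M$), but one must verify that the result remains a genuine witness---i.e., still commutes with the extended action of $\Gamma^M$ and has not collapsed into $\mathrm{span}(I, \SWAP)$. To control this I would split into two cases according to whether $G_{N_0}$ is finite or has a positive-dimensional identity component $G_{N_0}^0$. In the finite case, the classification of finite unitary $2$-designs (the paper's advertised ingredient) identifies the precise structural obstructions to universality, so the lift is transparent: the witness factors through the local $n$-qudit structure that generates $\Gamma$. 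In the continuous case, one instead works with the Lie algebra $\mathfrak{g} := \mathrm{Lie}(G_{N_0}^0)$, which is a proper Lie subalgebra of $\mathfrak{su}(d^{N_0})$ and is normalized by the finite quotient $G_{N_0}/G_{N_0}^0$; the invariants of this finite group acting on $\mathfrak{g}$ provide the extension.

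The main obstacle---and the place where the improvement over Ivanyos' bound must be squeezed out---is locating the threshold $N_0 = d^4(n-1)+1$ so that an obstruction witness at level $N_0$ is guaranteed to be determined by its restriction to any $n$-qudit subsystem, which is what makes the lift commute with every element of $\Gamma^M$. Heuristically, Ivanyos' bound $d^8(n-1)+1$ arises from degree bounds applied in two successive stages---one for the invariant-subspace analysis and one for the $2$-design analysis. I expect the new bound to follow from handling the $2$-fold commutant in a single pass via the finite $2$-design classification, effectively eliminating a squared factor of $d^{4}$ and producing the $d^4$ exponent advertised in the statement.
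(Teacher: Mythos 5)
There is a genuine gap --- in fact two. First, your claimed characterization of universality is wrong in the direction you need: it is not true that $G_M = \SU(d^M)$ if and only if $G_M$ is irreducible and the commutant of $\{g\otimes g\}$ on $V_M\otimes V_M$ is $2$-dimensional. That commutant condition is exactly the statement $\M_4(G_M)=\M_4(\SU(d^M))$, and it is satisfied by finite groups such as the Clifford group $\Cl_d(M)$, which is irreducible, has $2$-dimensional $2$-fold commutant, and is certainly not dense. Consequently, non-universality of $G_{N_0}$ does \emph{not} yield an ``obstruction witness'' in your sense; the witness may simply fail to exist. This is precisely why the paper needs Larsen's alternative (equality of fourth moments forces $G$ to be finite or all of $\SU$) together with the Bannai--Guralnick--Tiep classification of finite unitary $2$-groups and a Diophantine analysis to dispose of the finite alternatives (Clifford is the only one that persists for all $N$, and it is never eventually universal; the Lie-type and exceptional cases are confined to $N\leq 3$). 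Your proposal gestures at this classification ``in the finite case,'' but in that case there is no witness to lift, so the entire lifting framework does not apply and you would have to supply a different argument --- which is essentially the part of the paper you have not reproduced.

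Second, the quantitative heart of the theorem --- why the threshold is $d^4(n-1)+1$ --- is nowhere derived; you state it as an expectation about ``eliminating a squared factor.'' Moreover, the lifting step itself is the known hard point: Jeandel-type examples show that invariant structure at one level does not generically survive the addition of qudits, so verifying that a symmetrized witness ``has not collapsed'' and still commutes with gates straddling old and new qudits is exactly the content you would need to prove, and no mechanism for it is given. The paper obtains the bound by an entirely different route: using invariant theory, $\M_4(\Gamma^N)$ equals the Hilbert function $\dim(R_N/J_N(\langle\Gamma\rangle))$ of a homogeneous ideal in $R=\CC[x_1,\ldots,x_{d^4}]$ generated in degree $n$, and when the Hilbert polynomial is the constant $2$, Lazard's bound on the index of regularity gives $N\leq d^4(n-1)+1$; the finite-group cases are then eliminated as above. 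Without either this regularity argument or a worked-out substitute for it, your proposal establishes neither the bound nor the propagation of non-universality from level $N_0$ upward.
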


The remainder of this paper is dedicated to proving this result.

\section{Main Results}

Fix $d, n \geq 2$, $N \geq n$, and let $G$ be a compact subgroup of the general linear group $\GL(d^N, \CC)$. A key notion in this work is the \emph{$2k$th moment of $G$}, 
$$
\M_{2k}(G) = \int_{g \in G} |\tr(g)|^{2k}\mu_{\Haar}(G),
$$
where $\mu_{\Haar}(G)$ is the Haar measure on $G$. Importantly, if $G$ is a compact \emph{unitary} group, then $\M_{2k}(G)$ is the \emph{frame potential} of the Haar measure on $G$ \cite{Heinrich21, Katz04, Kowalski17}. 

A priori, the various moments of $G$ are arbitrary real numbers. However, these moments actually carry a tremendous amount of information about the ``size'' of $G$. Specifically, Larsen established the remarkable fact that if the 4th moment of a compact and unitary group $G$ is a particular value, then there are few alternatives for what $G$ can be.

\begin{theorem}[Larsen's Alternative for Unitary Groups \cite{Kowalski17}]
If $G \leq \SU(d^N)$ is compact and $\M_{4}(G) = \M_4(\SU(d^N))$, then $G$ is finite or $G = \SU(d^N)$.
\end{theorem}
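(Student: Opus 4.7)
The plan is to reinterpret the fourth moment representation-theoretically and then invoke a rigidity argument on the defining representation $V = \CC^{d^N}$. By orthogonality of matrix coefficients on a compact group,
\begin{equation*}
\M_4(G) \;=\; \int_G \tr(g \otimes g)\,\overline{\tr(g \otimes g)}\, d\mu_{\Haar}(g) \;=\; \dim \mathrm{End}_G(V\otimes V),
\end{equation*}
because the Haar projector onto the $G$-invariants of $V^{\otimes 2} \otimes (V^*)^{\otimes 2}$ has trace $\dim \mathrm{End}_G(V \otimes V)$. For $G = \SU(d^N)$, the decomposition $V \otimes V = \mathrm{Sym}^2 V \oplus \wedge^2 V$ into two non-isomorphic irreducibles yields $\M_4(\SU(d^N)) = 2$, and I would record this identity as the baseline.

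Next, for any compact $G \leq \SU(d^N)$, both $\mathrm{Sym}^2 V$ and $\wedge^2 V$ remain $G$-invariant and cannot be isomorphic as $G$-modules because their dimensions $\binom{d^N+1}{2}$ and $\binom{d^N}{2}$ differ. Hence $\dim \mathrm{End}_G(V \otimes V) \geq 2$ always, and the equality hypothesis forces both $\mathrm{Sym}^2 V$ and $\wedge^2 V$ to be $G$-irreducible. A short consequence I would record is that $V$ itself is $G$-irreducible: a proper unitary decomposition $V = V_1 \oplus V_2$ would split $\mathrm{Sym}^2 V$ as $\mathrm{Sym}^2 V_1 \oplus (V_1 \otimes V_2) \oplus \mathrm{Sym}^2 V_2$, contradicting irreducibility.

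The final step is a dichotomy on the identity component $G^0$, a closed connected Lie subgroup of $\SU(d^N)$ with finite quotient $G/G^0$. If $G^0 = \{e\}$, then $G$ is compact and $0$-dimensional, hence finite, and we are done. Otherwise I would need to conclude $G^0 = \SU(d^N)$, and this is where the main obstacle lies: it is an honest Lie-theoretic classification problem. My plan is to exclude every proper closed connected subgroup $H \subsetneq \SU(d^N)$ as a candidate for $G^0$ by showing that at least one of $\mathrm{Sym}^2 V$, $\wedge^2 V$ acquires too many $G$-stable summands to fit in a $2$-dimensional endomorphism algebra. Using the classification of maximal closed connected subgroups of $\SU(d^N)$: $H = \SO(d^N)$ preserves a nonzero symmetric form, peeling a trivial line off $\mathrm{Sym}^2 V$ (which the finite quotient $G/G^0$ cannot mix into anything of larger dimension); $H = \Sp(d^N/2)$ when $d^N$ is even is analogous for $\wedge^2 V$; tensor-product embeddings $V = V_1 \otimes V_2$ with $\dim V_i \geq 2$ split $\mathrm{Sym}^2 V$ as $(\mathrm{Sym}^2 V_1 \otimes \mathrm{Sym}^2 V_2) \oplus (\wedge^2 V_1 \otimes \wedge^2 V_2)$ whose constituents have unequal dimensions and so cannot be swapped by $G/G^0$; and reducible embeddings already contradict the irreducibility of $V$. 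Ruling out these maximal cases forces $G^0 = \SU(d^N)$, so that $G^0 \subseteq G \subseteq \SU(d^N)$ yields $G = \SU(d^N)$.
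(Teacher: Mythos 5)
Your first three steps are sound: the identity $\M_4(G)=\dim\mathrm{End}_G(V\otimes V)$ for compact $G$, the baseline $\M_4(\SU(d^N))=2$ via $V\otimes V=\mathrm{Sym}^2V\oplus\wedge^2V$, the observation that equality forces both summands (and hence $V$) to be $G$-irreducible, and the reduction to a dichotomy on the identity component $G^0$ all check out. Note that the paper does not prove this theorem at all --- it imports it from \cite{Kowalski17} --- so you are reconstructing that reference's argument, and your reconstruction has a genuine gap exactly where you flag ``the main obstacle.'' The maximal closed connected subgroups of $\SU(d^N)$ are \emph{not} exhausted by reducible subgroups, tensor-product subgroups, $\SO(d^N)$, and the symplectic group: there is a further class consisting of simple compact groups embedded through a primitive irreducible representation that is neither self-dual nor tensor-decomposable (e.g.\ $\SU(3)$ acting on $\mathrm{Sym}^2\CC^3\cong\CC^6$ inside $\SU(6)$, or $E_6$ on $\CC^{27}$). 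For these candidates there is no invariant bilinear form and no tensor splitting, so your mechanism for manufacturing extra $G$-stable summands of $\mathrm{Sym}^2V$ or $\wedge^2V$ says nothing about them; ruling them out is the hard core of Larsen's theorem, not a routine check. Separately, the ``cannot be mixed by $G/G^0$'' steps need Clifford theory stated carefully: what is true is that a $G$-irreducible module is permuted transitively by $G$ on its $G^0$-isotypic components, and in the tensor case you only know $G^0$ is \emph{contained in} the image of $\SU(V_1)\times\SU(V_2)$, so the two displayed summands need not be sums of $G^0$-isotypic components and the unequal-dimension argument does not apply as written.

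There is a finish that avoids the classification entirely, and it is essentially the proof in the cited source. Since $V^{\otimes 2}\otimes(V^*)^{\otimes 2}\cong\mathrm{End}(V\otimes V^*)$ as $G$-modules, one also has $\M_4(G)=\dim\mathrm{End}_G(V\otimes V^*)$ for the conjugation action, and $V\otimes V^*=\CC I\oplus\mathfrak{sl}(V)$, so the hypothesis $\M_4(G)=2$ is equivalent to $\mathfrak{sl}(V)$ being $G$-irreducible under conjugation. Let $\mathfrak{g}$ be the Lie algebra of $G$. Because $G\leq\SU(d^N)$, the complexification $\mathfrak{g}_{\CC}$ sits inside $\mathfrak{sl}(V)$, and because $G$ normalizes $G^0$ it is a $G$-stable subspace. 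Irreducibility then leaves only $\mathfrak{g}_{\CC}=0$, in which case $G^0$ is trivial and $G$ is finite, or $\mathfrak{g}_{\CC}=\mathfrak{sl}(V)$, in which case $\mathfrak{g}=\mathfrak{su}(d^N)$, so $G^0=\SU(d^N)$ and hence $G=\SU(d^N)$. If you replace your maximal-subgroup analysis with this Lie-algebra step, your earlier reductions become unnecessary scaffolding and the proof closes cleanly.
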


Larsen's alternative is useful because it implies a very simple criterion for eventual universality. To improve the readability of what follows, we slightly abuse our notation and write $\M_k(\Gamma^N)$ for $\M_k(\mathrm{cl}(\words{\Gamma^N}))$, where $\Gamma$ is a gate set, $\words{\Gamma^N}$ is the group generated by $\Gamma^N$, and $\mathrm{cl}(\words{\Gamma^N})$ is the closure of $\words{\Gamma^N}$ in $\SU(d^N)$. Note also that $\mathrm{cl}(\words{\Gamma^N})$ is compact because it is a closed subgroup of the compact group $\SU(d^N)$.

\begin{corollary}[Criterion for Eventual Universality]
\label{cor:criterion}
Let $\Gamma \subset \SU(d^n)$ be an $n$-qu$d$it gate set. Then, $\Gamma$ is eventually universal if and only if there is $N \geq n$ such that $\M_{4}(\Gamma^N) = \M_4(\SU(d^N))$ and $|\words{\Gamma^N}| = \infty$. Moreover, $\K(\Gamma) \leq N$.
\end{corollary}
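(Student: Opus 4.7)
The plan is to deduce both directions of the corollary as fairly direct consequences of Larsen's alternative, with the main work being setting up the hypotheses correctly and ruling out the finite alternative.

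For the ``only if'' direction I would simply unpack the definition of eventual universality: if $\Gamma$ is eventually universal, pick the smallest $N \geq n$ with $\Gamma^N$ universal, so that $\mathrm{cl}(\words{\Gamma^N}) = \SU(d^N)$. Then the moment equality $\M_4(\Gamma^N) = \M_4(\SU(d^N))$ holds tautologically, and since $\SU(d^N)$ is infinite, $\words{\Gamma^N}$ must be infinite as well (otherwise its closure would be finite, hence a proper subset of $\SU(d^N)$).

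For the ``if'' direction, which is where Larsen's alternative does the real work, I would set $G \defeq \mathrm{cl}(\words{\Gamma^N})$. Because $\Gamma \subset \SU(d^n)$ and tensoring with the identity and conjugating by $\SWAP$ gates keeps each element in $\SU(d^N)$, the group $\words{\Gamma^N}$ sits inside $\SU(d^N)$, and $G$ is a closed subgroup of the compact group $\SU(d^N)$, hence itself compact. The hypothesis $\M_4(\Gamma^N) = \M_4(\SU(d^N))$ is exactly $\M_4(G) = \M_4(\SU(d^N))$, so Larsen's alternative applies and yields the dichotomy: either $G$ is finite or $G = \SU(d^N)$.

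To finish I would rule out the finite case using the second hypothesis: if $G$ were finite then $\words{\Gamma^N} \subseteq G$ would also be finite, contradicting $|\words{\Gamma^N}| = \infty$. Therefore $G = \SU(d^N)$, which means $\Gamma^N$ is universal. By definition of $\K(\Gamma)$ this gives $\K(\Gamma) \leq N < \infty$, so $\Gamma$ is eventually universal and the quantitative bound on $\K(\Gamma)$ is established. There is no real obstacle here; the content of the corollary is almost entirely absorbed by Larsen's alternative, and the only subtlety is noticing that the ``infinite cardinality'' hypothesis is precisely what is needed to exclude the finite branch of that dichotomy.
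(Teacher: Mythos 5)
Your proof is correct and follows exactly the route the paper intends: the corollary is presented as an immediate consequence of Larsen's alternative, with the infiniteness hypothesis serving precisely to exclude the finite branch of the dichotomy, and your two directions (tautological moment equality plus infinitude of a dense subgroup for ``only if''; Larsen's alternative applied to the compact closure for ``if'') match that reading. No gaps.
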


In \cite{Iva06}, Ivanyos uses \corref{cor:criterion} to obtain his upper bound on $\K(\Gamma)$, and this is also our approach to improve his bound. In Ivanyos' case, however, he leverages the fact that for all compact $G \leq \SU(d^N)$, $\M_8(G) = \M_8(\SU(d^N))$ implies $\M_4(G) = \M_4(\SU(d^N))$ \cite{GT05, Heinrich21}. (In the language of unitary $t$-designs, this is simply the statement that a unitary $4$-design is a unitary $2$-design.) Therefore, it suffices to look at the $8$th moment of $G$, as opposed to the $4$th. This is a major simplification, for a result of Bannai et al. \cite{BNRT_18}, which builds on the work of Guralnick and Tiep \cite{GT05}, proves that if $d^N \geq 5$, then there are no \emph{finite} groups $G \leq \SU(d^N)$ for which $\M_8(G) = \M_8(\SU(d^N))$. Therefore, to upper-bound $N$ such that $\M_{8}(\Gamma^N) = \M_8(\SU(d^N))$ is to upper-bound $N$ such that $\M_{8}(\Gamma^N) = \M_8(\SU(d^N))$ \emph{and} $|\words{\Gamma^N}| = \infty$. In \cite{Iva06}, Ivanyos does just this and proves the following result.

\begin{theorem}[Ivanyos \cite{Iva06}]
\label{thm:ivanyoseighthmoment}
Let $\Gamma$ be an $n$-qu$d$it gate set for which there exists $N \geq n$ such that $\M_8(\Gamma^N) = \M_8(\SU(d^N))$. Then, the smallest such $N$ satisfies $N \leq d^8(n-1) + 1$. Consequently, $\K(\Gamma) \leq d^8(n-1) + 1$.
\end{theorem}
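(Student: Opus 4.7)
My plan is to combine \corref{cor:criterion} with three ingredients cited in the paper: (a) a unitary $4$-design is a unitary $2$-design, so $\M_8(G) = \M_8(\SU(d^N))$ forces $\M_4(G) = \M_4(\SU(d^N))$; (b) Larsen's Alternative, which then forces any such $G \leq \SU(d^N)$ to be finite or equal to $\SU(d^N)$; and (c) the Bannai et al.\ classification, which rules out the finite case when $d^N \geq 5$ (with $d^N = 4$ handled separately, since $d, n \geq 2$ gives $d^n \geq 4$). Together these give the dichotomy: $\M_8(\Gamma^N) = \M_8(\SU(d^N))$ if and only if $\cl(\words{\Gamma^N}) = \SU(d^N)$, i.e., if and only if $\Gamma$ is universal on $N$ qudits. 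Thus, once we bound the smallest $N$ with $\M_8(\Gamma^N) = \M_8(\SU(d^N))$ by $d^8(n-1)+1$, the ``consequently'' clause $\K(\Gamma) \leq d^8(n-1)+1$ follows.

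The substantive task is the quantitative bound on that smallest $N$. I would pass to the invariant-theoretic picture via the identity $\M_{2k}(G) = \dim \mathrm{End}_G(V^{\otimes k})$ (valid for any compact $G \leq \U(d^N)$ acting on $V = \CC^{d^N}$), together with the Schur--Weyl computation $\M_8(\SU(d^N)) = 4! = 24$ for $d^N \geq 4$. Writing $G_N := \cl(\words{\Gamma^N})$, the task becomes bounding the smallest $N \geq n$ for which the commutant of $G_N$ on $V_N^{\otimes 4}$ (of dimension $\M_8(G_N)$) is exactly $24$-dimensional. My strategy is a descent on $N$: show that the nonnegative integer $e(N) := \M_8(G_N) - 24$ can be driven to $0$ by $N = d^8(n-1)+1$. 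The decrement at each step arises because $G_{N+1}$ contains not only $G_N \otimes I_{\CC^d}$ but also swap gates mixing the new qudit with the old ones; with $W := \CC^d$, the factorization $V_{N+1}^{\otimes 4} \otimes (V_{N+1}^*)^{\otimes 4} \cong \bigl(V_N^{\otimes 4} \otimes (V_N^*)^{\otimes 4}\bigr) \otimes \bigl(W^{\otimes 4} \otimes (W^*)^{\otimes 4}\bigr)$ lets one express the $G_{N+1}$-invariants as a subspace of $C_N \otimes \mathrm{End}(W^{\otimes 4})$ (where $C_N := \mathrm{End}_{G_N}(V_N^{\otimes 4})$), which is then further cut by swap-commutation constraints. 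The dimension $d^8 = \dim \mathrm{End}(W^{\otimes 4})$ is what produces the factor $d^8$ in the final bound.

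The main obstacle, and the technical heart of the proof, is calibrating this descent to obtain \emph{exactly} $N \leq d^8(n-1)+1$. The initial commutant $C_n$ can a priori have dimension as large as $d^{8n}$, so the per-step improvement cannot be a bare ``decrease by one'': it has to come from a structural invariant-theoretic count showing that the swap constraints annihilate commutant elements at a rate proportional to $d^8$ times (some function of) the current excess. Executing this rigorously requires a careful decomposition of $C_N$ under the $S_4$-action furnished by Schur--Weyl, tracking which parts of $C_N$ lift to elements of $C_{N+1}$ and which are killed by the new swap generators. This is essentially the argument carried out by Ivanyos, now repackaged through the frame-potential / commutant-dimension viewpoint.
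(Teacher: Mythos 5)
Your reduction of the ``consequently'' clause is sound: from $\M_8 = \M_8$ you get $\M_4 = \M_4$ (a unitary $4$-design is a unitary $2$-design), Larsen's alternative then gives ``finite or all of $\SU(d^N)$,'' and Bannai--Guralnick--Tiep rules out the finite case once $d^N \geq 5$, so the smallest $N$ with $\M_8(\Gamma^N) = \M_8(\SU(d^N))$ is already the smallest $N$ at which $\cl(\words{\Gamma^N}) = \SU(d^N)$. The translation $\M_{2k}(G) = \dim \mathrm{End}_G(V^{\otimes k}) = \dim_\CC \Hom_{\CC[G]}\bigl((V \otimes V^*)^{\otimes k}, \CC\bigr)$ and the identification of $d^8$ as $\dim(W^{\otimes 4} \otimes (W^*)^{\otimes 4})$ for $W = \CC^d$ are also the right starting point.

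But the quantitative heart of the theorem --- where $d^8(n-1)+1$ actually comes from --- is missing. Your proposed ``descent on $N$'' has no mechanism that yields this specific bound, and you acknowledge as much when you write that calibrating it is the main obstacle and that ``this is essentially the argument carried out by Ivanyos.'' A per-step decrement of the excess $e(N)$, even one ``proportional to $d^8$ times the excess,'' would not naturally produce an \emph{additive} formula of the shape $m(n-1)+1$ with $m = d^8$. The argument the paper actually invokes (Appendix~\ref{append:fourthmomentsboundproof}, \thmref{thm:polynomial_ideal}, following \cite{Iva06}) is a genuinely different ingredient: one builds a homogeneous polynomial ideal $J(\words{\Gamma}) \subseteq \CC[x_1,\dots,x_m]$ with $m = d^{2k}$, generated in degree $n$, such that for all $N \geq n$ the moment $\M_{2k}(\Gamma^N)$ equals the value $HF_J(N) = \dim(R_N/J_N)$ of the Hilbert function of $J$. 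Since $\M_{2k}(\SU(d^N))$ is the constant $k!$ (here $24$) once $d^N \geq 2k$, the hypothesis says that $J$ has \emph{constant} Hilbert polynomial, and then Lazard's regularity bound \cite{Lazard81, Lazard01} --- the index of regularity of a homogeneous ideal generated in degree $n$ in $m$ variables whose Hilbert polynomial is constant is at most $m(n-1)+1$ --- delivers $N \leq d^{8}(n-1)+1$ in one stroke. Without the passage to the Hilbert function and Lazard's theorem, your sketch does not close; the swap-commutation ``descent'' is a heuristic, not a proof.

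Two smaller points. First, the $d^N = 4$ boundary case you flag is not actually dispatched by noting $d^n \geq 4$; if $d=2, n=2, N=2$ the Bannai et al.\ exclusion does not apply, so you should check that case directly or note that $d^N \geq 2k = 8$ is what is needed for $\M_8(\SU(d^N)) = 24$ in the first place. Second, your claimed factorization realizes the $G_{N+1}$-commutant inside $C_N \otimes \mathrm{End}(W^{\otimes 4})$ only relative to $G_N \otimes I$, before imposing the new swaps; making the swap constraints bite quantitatively is exactly the step you have not supplied and which the Hilbert-function route bypasses entirely.
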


However, Larsen's alternative and \corref{cor:criterion} only call for the \emph{4th} moments to be equal, not the 8th. Thus, a better bound on the least $N$ for which $\M_4(\Gamma^N) = \M_4(\SU(d^N))$ seems plausible. Indeed, in \appendref{append:fourthmomentsboundproof}, we prove as much using similar techniques to Ivanyos.

\begin{restatable}{theorem}{fourthmoment}
\label{thm:fourthmomentbound}
Let $\Gamma$ be an $n$-qu$d$it gate set for which there exists $N \geq n$ such that $\M_4(\Gamma^N) = \M_4(\SU(d^N))$. Then, the smallest such $N$ satisfies $N \leq d^4(n-1) + 1$.
\end{restatable}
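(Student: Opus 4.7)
The plan is to mirror Ivanyos' invariant-theoretic proof of \thmref{thm:ivanyoseighthmoment}, but carried out at the level of the $4$th moment rather than the $8$th. The starting point is the standard frame-potential identity
\[
\M_4(\Gamma^N) \;=\; \dim \mathrm{End}_{\cl(\words{\Gamma^N})}\!\bigl((\CC^{d^N})^{\otimes 2}\bigr),
\]
so the hypothesis $\M_4(\Gamma^N) = \M_4(\SU(d^N)) = 2$ is the statement that the commutant of $\Gamma^N$ on $(\CC^{d^N})^{\otimes 2}$ is two-dimensional, i.e.\ it coincides with the commutant of $\SU(d^N)$, spanned by the identity and the swap of the two tensor copies.

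The key structural move is to transport the commutant to the qudit decomposition. Regrouping the two tensor copies slot by slot yields
\[
\mathrm{End}\bigl((\CC^{d^N})^{\otimes 2}\bigr) \;\cong\; W^{\otimes N}, \qquad W \defeq \mathrm{End}\bigl((\CC^d)^{\otimes 2}\bigr), \quad \dim W = d^4.
\]
Under this identification, the SWAP gates in $\Gamma^N$ act on $W^{\otimes N}$ by permuting the $N$ slots, and each $\gamma \in \Gamma$ acts by conjugation on some $n$ of the slots and trivially on the remaining $N-n$. Any element of the commutant is therefore automatically $S_N$-symmetric, so it lies in $\mathrm{Sym}^N(W) \subseteq W^{\otimes N}$, whose dimension $\binom{N + d^4 - 1}{d^4 - 1}$ is only polynomial in $N$.

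The combinatorial crux is then a pigeonhole on multisets. A natural basis of $\mathrm{Sym}^N(W)$ is indexed by multisets of size $N$ drawn from a $d^4$-element basis of $W$, and as soon as $N \geq d^4(n-1)+1$ every such multiset has some basis vector of multiplicity at least $n$. I would combine this with Ivanyos' polynomial-invariant lemma for finite linear groups, adapted from his $8$th-moment setting to the $2$-copy setting, to argue that at this threshold every $\Gamma$-invariant in $\mathrm{Sym}^N(W)$ can be reduced --- by isolating a repeated basis vector and reapplying $\Gamma$-invariance on any $n$ of its copies --- to the two universal invariants inherited from $\SU(d^N)$. This is precisely the pigeonhole that replaces Ivanyos' threshold $d^8(n-1)+1$: his analysis of the $4$-fold commutant on $(\CC^{d^N})^{\otimes 4}$ forces $W$ to have dimension $d^8$, whereas the $4$th moment involves only the $2$-fold commutant, halving the exponent of $d$.

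The step I expect to be the main obstacle is exactly this pigeonhole-to-dimension collapse. Because the $\Gamma$-action on $W$ does not permute basis vectors --- it is a conjugation action that mixes the basis of $\mathrm{End}((\CC^d)^{\otimes 2})$ linearly --- the combinatorial pigeonhole must be coupled carefully with the linear relations that $\Gamma$-invariance imposes on the multiset coefficients. Once that reduction is in place, the conclusion is immediate: the least $N \geq n$ satisfying $\M_4(\Gamma^N) = \M_4(\SU(d^N))$, if it exists at all, is at most $d^4(n-1)+1$.
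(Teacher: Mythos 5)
Your setup matches the first half of the paper's argument: the identification of $\M_4(\Gamma^N)$ with the dimension of the commutant (equivalently, the invariant space) on two copies, the slot-by-slot regrouping into $W^{\otimes N}$ with $\dim W = d^4$, the observation that the SWAPs in $\Gamma^N$ let you pass to $S_N$-symmetric invariants, and the correct numerology explaining why Ivanyos' $d^8$ becomes $d^4$ here. But the heart of the theorem --- why the specific threshold $d^4(n-1)+1$ suffices --- is exactly the step you leave as an unproven ``pigeonhole-to-dimension collapse,'' and as sketched it cannot work. First, your reduction never uses the hypothesis that $\M_4(\Gamma^{N}) = \M_4(\SU(d^{N})) = 2$ holds for \emph{some} $N$; a purely combinatorial argument that every invariant in $\mathrm{Sym}^N(W)$ collapses to the two universal ones once $N \geq d^4(n-1)+1$ would be an unconditional statement, and it is false (for a gate set generating, say, the Clifford group--- or any group whose commutant stays large --- the invariant space never shrinks to dimension $2$, or shrinks only in a way governed by the group, not by $N$ alone). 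Second, as you yourself note, the constraints imposed by $\Gamma$ on the multiset coefficients are linear conditions mixing the basis of $W$, not monomial ones, so pigeonholing a slot of multiplicity $\geq n$ gives you no licensed ``reduction'' move; making that precise is not a technical refinement of your sketch but the entire difficulty.

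The paper closes this gap with commutative algebra rather than combinatorics. Following Ivanyos, the invariant dimensions are encoded as the Hilbert function of a homogeneous ideal $J(\words{\Gamma}) \subseteq \CC[x_1,\ldots,x_{d^4}]$ generated by homogeneous polynomials of degree $n$, so that $\M_4(\Gamma^N) = \dim\bigl(R_N/J_N(\words{\Gamma})\bigr)$ for all $N \geq n$. The hypothesis then says the Hilbert function attains the value $2$, which (since it is eventually polynomial and bounded below by $2$) forces the Hilbert polynomial to be the constant $2$; and Lazard's theorem states that for an ideal in $m$ variables generated in degree $n$ with constant Hilbert polynomial, the index of regularity is at most $m(n-1)+1$. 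With $m = d^4$ this yields $N \leq d^4(n-1)+1$. That regularity bound is the substantive ingredient your proposal is missing: it is precisely the rigorous form of the ``collapse at the threshold'' you were hoping the multiset pigeonhole would provide, and it genuinely needs the degree-$n$ generation of the ideal together with the constancy of the Hilbert polynomial, not just a counting argument on $\mathrm{Sym}^N(W)$.
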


However, unlike Ivanyos' \thmref{thm:ivanyoseighthmoment}, we cannot conclude from \thmref{thm:fourthmomentbound} and \corref{cor:criterion} alone that $\K(\Gamma) \leq d^4(n-1) + 1$ because there exist \emph{finite} $G \leq \SU(d^N)$ for which $\M_4(G) = \M_4(\SU(d^N))$, e.g., the $N$-qu$d$it Clifford group $\Cl_{d}(N)$ \cite{GAE_07, Heinrich21}. For that, we need to better understand the \emph{finite} subgroups $G < \SU(d^N)$ for which $\M_4(G) = \M_4(\SU(d^N))$. 

As defined in \cite{GT05}, a \emph{finite} group $G \leq \SU(d^N)$ satisfying $\M_4(G) = \M_4(\SU(d^N))$ is called a \emph{unitary $2$-group}, which is an instance of a \emph{unitary $k$-group}. Fortunately, the properties of unitary $k$-groups are well-understood, and there is a complete classification of all unitary $2$-groups due to Bannai et al. \cite{BNRT_18, GT05}. That said, the complete classification is rather involved and includes certain irreducible representations of particular unitary and symplectic groups, as well as a finite list of exceptions. Here, we give an abridged version of this classification so to not distract from the details of the classification that matter to us. In what follows, $\overline{G}$ is the projective group $G / \mathbf{Z}(G)$, where $\mathbf{Z}(G)$ is the center of $G$, and $\Cl_d(N)$ is the $N$-qu$d$it Clifford group.

\begin{theorem}[Bannai et al. \cite{BNRT_18}, Guralnick and Tiep \cite{GT05}, Heinrich \cite{Heinrich21}, Abridged]
\label{thm:unitary-2-groups}
Let $d,N \geq 2$ such that $d^N \geq 5$ and let $G < \SU(d^N)$ be a unitary $2$-group (i.e., a finite unitary group such that $\M_4(G) = \M_4(\SU(d^N))$). Then, one of the following cases applies.
\begin{enumerate}[(i)]
\item (Lie-Type Case) $d^N$ equals $(3^k \pm 1) / 2$ or $(2^k + (-1)^k)/3$ for some positive integer $k$, and $G$ is a particular group that is not isomorphic to $\Cl_d(N)$.
\item (Extraspecial Case) $d$ is a prime power and $\overline{G}$ is isomorphic to $\overline{\Cl_d(N)}$.
\item (Exceptional Case) $d = 2$, $N = 3$, and $G$ is a particular $3$-qubit group that is not isomorphic to $\Cl_2(3)$.\footnote{That this is the only exceptional case in this abridged classification follows from the full classification in \cite{BNRT_18, GT05} together with our assumption that the dimension $d^N$ is a perfect power.}
\end{enumerate}
\end{theorem}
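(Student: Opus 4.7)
The plan is to derive the abridged classification by specializing the complete classification of unitary $2$-groups from Guralnick--Tiep and Bannai et al.\ to those whose ambient dimension has the form $D = d^N$ with $d, N \geq 2$ and $D \geq 5$. The full classification partitions every unitary $2$-group into three kinds: (a) an \emph{extraspecial} family in which $G$ normalizes an extraspecial $p$-group and the projective image $\overline{G}$ is a quotient of a symplectic group over $\mathbb{F}_p$; (b) a \emph{Lie-type} family consisting of certain low-dimensional irreducible representations of classical groups whose dimensions are of the tabulated forms $(3^k \pm 1)/2$ or $(2^k + (-1)^k)/3$; and (c) a finite explicit list of sporadic exceptional representations.

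First I would restate the full classification in the form needed, recording for each family both the dimension formula and the identification of $\overline{G}$. Next I would process the extraspecial family: its dimension is a prime power $p^m$, which agrees with $D = d^N$ precisely when $d$ is itself a power of $p$, and the known identification of $\overline{G}$ in this case with the projective Clifford group $\overline{\Cl_d(N)}$ yields case (ii). The Lie-type family contributes case (i) directly, since the dimension formulas in the statement are exactly those tabulated in the full classification, and the natural representation of $G$ in this family is not the $N$-qu$d$it Clifford representation.

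The main---and essentially the only non-routine---step is the sporadic case. I would enumerate the perfect powers $\{d^N : d, N \geq 2,\ d^N \geq 5\} = \{8, 9, 16, 25, 27, 32, 36, 49, \ldots\}$ and, for each sporadic entry in the tables of \cite{BNRT_18, GT05}, compare its dimension against this set. The expected outcome---and where one must be most careful---is that exactly one dimension, namely $D = 8$, arises from a sporadic case, realized by a particular $3$-qubit subgroup of $\SU(8)$ distinct from $\Cl_2(3)$. The hard part is not conceptual but bookkeeping: one must be confident that no other sporadic entry has been missed, which amounts to going through the finitely many dimensions listed in the full classification tables and confirming the only coincidence with a perfect power of the form $d^N$, $N \geq 2$, occurs at dimension $8$.
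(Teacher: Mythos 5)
Your proposal follows essentially the same route as the paper, which does not reprove the classification but simply specializes the full Guralnick--Tiep/Bannai et al.\ result: the extraspecial family gives case (ii), the Weil-representation dimensions $(3^k\pm1)/2$, $(2^k+(-1)^k)/3$ give case (i), and the footnoted bookkeeping check that the only sporadic dimension which is a perfect power $d^N\geq 5$ with $N\geq 2$ is $8$ gives case (iii). Your plan is the same argument, spelled out slightly more explicitly than the paper's footnote.
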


This classification details all the ways in which a unitary group $G$ can satisfy $\M_4(G) = \M_4(\SU(d^N))$ and $|G| < \infty$. In the context of the criterion for eventual universality (Corollary~\ref{cor:criterion}), it details all the ways in which an $n$-qu$d$it gate set $\Gamma$ can satisfy $\M_{4}(\Gamma^N) = \M_4(\SU(d^N))$ and $|\words{\Gamma^N}| < \infty$ for any given $N \geq n$. In what follows, we will use this classification to show that unless $\Gamma$ is the Clifford gate set, if $\M_{4}(\Gamma^N) = \M_4(\SU(d^N))$ and $N > 3$, then $\M_{4}(\Gamma^{N+1}) = \M_4(\SU(d^{N+1}))$ and $|\words{\Gamma^{N+1}}| = \infty$. 

First, consider the extraspecial case in \thmref{thm:unitary-2-groups}. A result by Heinrich \cite{Heinrich21} essentially ``singles out" the Clifford gate set as the unique gate set that always generates a finite group, no matter how many ancillary qu$d$its are added.

\begin{proposition}[Proposition 13.1(i) in \cite{Heinrich21}]
\label{prop:heinrich}
Let $\Gamma$ be an $n$-qu$d$it gate set, where $d,n \geq 2$. If for all $N \geq n$, $\M_4(\Gamma^N) = \M_4(\SU(d^N))$ and $|\words{\Gamma^N}| < \infty$, then $d$ is a prime power and $\overline{\words{\Gamma^N}}$ is isomorphic to the $N$-qu$d$it Clifford group $\overline{\Cl_d(N)}$. In particular, $\Gamma$ is not eventually universal.
\end{proposition}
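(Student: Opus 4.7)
The plan is to feed every finite group $G_N := \words{\Gamma^N}$ into Theorem~\ref{thm:unitary-2-groups} and show that only the Extraspecial case can apply for all sufficiently large $N$; this forces $d$ to be a prime power and identifies $\overline{G_N}$ with the projective Clifford group. The remaining short range $n \leq N < N_0$ is then handled by descending the Clifford structure through the tensor-with-identity embedding inherited from $\Gamma^N \otimes I \subseteq \Gamma^{N+1}$.

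First, for each $N \geq n$ the hypothesis makes $G_N$ a unitary $2$-group, so, assuming $d^N \geq 5$ (with the boundary case $d = N = 2$ treated by hand), Theorem~\ref{thm:unitary-2-groups} places $G_N$ in one of cases (i)--(iii). The Lie-Type case demands $d^N \in \{(3^k \pm 1)/2,\; (2^k + (-1)^k)/3 : k \in \NN\}$, an arithmetic condition that, for fixed $d$, admits only finitely many solutions $N$ by standard results on exponential Diophantine equations (Zsygmondy's theorem and Mihailescu's theorem on Catalan-type forms). The Exceptional case is confined to $(d, N) = (2, 3)$. Hence there is $N_0 = N_0(d)$ so that for all $N \geq N_0$ only the Extraspecial case can apply; this gives that $d$ is a prime power and $\overline{G_N} \cong \overline{\Cl_d(N)}$.

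For the remaining $n \leq N < N_0$, I would iterate the inclusion $\Gamma^N \otimes I \subseteq \Gamma^{N+1}$ to obtain $G_N \otimes I^{\otimes (M - N)} \leq G_M$ for any $M \geq N$. Picking $M \geq \max(N_0, N)$ places this image inside $\overline{\Cl_d(M)}$; by the tensor-product structure of the Clifford group the image must live in the natural copy of $\overline{\Cl_d(N)}$ supported on the first $N$ qudits. Reapplying Theorem~\ref{thm:unitary-2-groups} to $G_N$ itself, one then checks that neither a Lie-Type nor an Exceptional unitary $2$-group can sit projectively inside $\overline{\Cl_d(N)}$ while still saturating the $4$th-moment bound, so only the Extraspecial description is left at these $N$ as well.

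Finally, because $|G_N| < \infty$ for every $N \geq n$, the compact closure $\cl(\words{\Gamma^N}) = G_N$ is a proper subgroup of $\SU(d^N)$, so $\Gamma^N$ is never dense and $\Gamma$ is not eventually universal. I expect the main obstacle to be the short-range step: justifying rigorously that a Lie-Type or Exceptional $G_N$ cannot embed projectively into $\overline{\Cl_d(N)}$ while maintaining $\M_4(G_N) = \M_4(\SU(d^N))$. The extremality of the frame potential, together with the explicit extraspecial-normalizer description of the Clifford group, should enforce this rigidity, but this is where the bulk of the case analysis must be carried out.
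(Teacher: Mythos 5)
You are reconstructing a result the paper itself does not prove: Proposition~\ref{prop:heinrich} is imported verbatim from \cite{Heinrich21}, so there is no in-paper argument to match against. Your large-$N$ half is sound and is essentially the same use of the classification that the paper makes elsewhere: for $N \geq 4$ (so in particular $d^N \geq 5$), Lemma~\ref{lem:lie-type-case} (which you could simply invoke instead of appealing to general exponential-Diophantine results) confines the Lie-type case to $N = 2$ and \thmref{thm:unitary-2-groups} confines the exceptional case to $(d,N) = (2,3)$, so only the extraspecial case survives, giving $d$ a prime power and $\overline{\words{\Gamma^N}} \cong \overline{\Cl_d(N)}$ for those $N$. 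Also note that the final clause needs none of this machinery: $|\words{\Gamma^N}| < \infty$ for every $N \geq n$ already makes $\cl(\words{\Gamma^N})$ finite, hence never dense, so $\Gamma$ is not eventually universal.

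The genuine gap is the short-range step $n \leq N \leq 3$, and it is not only the deferred case analysis you flag. Your key move --- that the image of $\words{\Gamma^N} \otimes I$ inside $\words{\Gamma^M}$ ``must live in the natural copy of $\overline{\Cl_d(N)}$ supported on the first $N$ qudits'' --- presupposes that the identification of $\overline{\words{\Gamma^M}}$ with $\overline{\Cl_d(M)}$ is compatible with the qudit tensor factorization. The classification only supplies an abstract isomorphism (equivalently, a unitary conjugation onto the normalizer of a standard extraspecial group), and that conjugating unitary need not factor across the cut between the first $N$ qudits and the ancillas; so the implication ``$g \otimes I$ Clifford $\Rightarrow$ $g$ Clifford'', which is true for the standard Clifford group, is not available. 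What survives is only an abstract embedding $\overline{\words{\Gamma^N}} \hookrightarrow \overline{\Cl_d(M)}$, and that is too weak: in the one Lie-type scenario ($d=2$, $N=2$, where $\overline{\words{\Gamma^2}}$ would be the projective Weil-representation group $PSp_4(3) \cong U_4(2)$ of order $25920$), an order or frame-potential comparison only excludes an embedding into $\overline{\Cl_2(2)}$ (order $11520$), which is exactly the unproven supported-on-the-first-two-qubits claim; nothing you state rules out $U_4(2)$ sitting abstractly inside the enormous group $\overline{\Cl_2(M)}$, whose subgroups need not be $2$-designs, so ``extremality of the frame potential'' does not bite there. The same issue affects the exceptional case $(d,N)=(2,3)$, and the boundary case $d^N = 4 < 5$ that you propose to treat ``by hand'' is precisely the dimension where the Lie-type group lives, so it cannot be waved off. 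Closing the argument requires either the finer (unabridged) structural information in the extraspecial case --- e.g.\ locating an irreducible Heisenberg subgroup of $\words{\Gamma^M}$ and arguing compatibly with the tensor cut --- or a different route, as in Heinrich's own proof in \cite{Heinrich21}.
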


This proposition proves that the extraspecial case in \thmref{thm:unitary-2-groups} is the unique instance for which $\Gamma$ satisfies $\M_4(\Gamma^N) = \M_4(\SU(d^N))$ for all $N \geq n$, and yet still fail to be eventually universal. Of course, it is not surprising that the Clifford group behaves this way. What is surprising, though, is that the Clifford group is the \emph{only} group that behaves this way.

On the other hand, if $\Gamma$ is such that $\words{\Gamma}$ is either Lie-type or exceptional, then the question remains \textit{how large} must $N$ be for $|\words{\Gamma^N}| = \infty$. Below, we will prove that in both cases, if $N > 4$, then $|\words{\Gamma^N}| = \infty$. We start with the Lie-type case.

\begin{proposition}
\label{prop:lietypelemma}
Let $\Gamma$ be an $n$-qu$d$it gate set, where $d,n \geq 2$. If there exists $N \geq n$ such that $\M_{4}(\Gamma^N) = \M_4(\SU(d^N))$, $|\words{\Gamma^N}| < \infty$, and $\words{\Gamma^N}$ is either Lie-type or exceptional, then $N \leq 3$ and $\K(\Gamma) \leq d^4(n-1) + 1$.
\end{proposition}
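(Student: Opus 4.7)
The plan is to split along the two sub-cases of \thmref{thm:unitary-2-groups} allowed by the hypothesis. In the exceptional case the classification immediately fixes $d = 2$, $N = 3$, so $N \leq 3$ is automatic. The substantive work is the Lie-type case, where one must rule out the existence of positive integers $k, d, N$ with $d \geq 2$, $N \geq 3$, and $d^N = (3^k \pm 1)/2$ or $d^N = (2^k + (-1)^k)/3$. I plan to dispose of this by a combination of congruence arguments and Mihailescu's theorem (Catalan's conjecture). A mod-$9$ reduction already kills $N = 3$ in the first family: when $k \geq 2$ one has $d^N \equiv 4$ or $5 \pmod 9$, while cubes modulo $9$ lie in $\{0,1,8\}$. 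For $N \geq 4$, Mihailescu applied to $3^k - 2 d^N = \pm 1$ handles the case $d$ a power of $2$, and the remaining case $d$ odd reduces to a Lebesgue--Ramanujan--Nagell-type equation $(3^k \mp 1)/2 = y^N$ with $N \geq 3$, which has only sporadic solutions that can be enumerated and checked to fall outside the regime. A parallel argument on $2^k - 3 d^N = \pm 1$ handles the second family. This Diophantine step is the main technical obstacle.

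For the bound $\K(\Gamma) \leq d^4(n-1)+1$, I invoke \thmref{thm:fourthmomentbound} to obtain the smallest $M^\ast \geq n$ with $\M_4(\Gamma^{M^\ast}) = \M_4(\SU(d^{M^\ast}))$; this $M^\ast$ satisfies $M^\ast \leq d^4(n-1)+1$. A short observation---that $\Gamma^{M^\ast}$ embeds inside $\Gamma^M$ as $\Gamma^{M^\ast} \otimes I$, with the available $\SWAP$ gates distributing this across all $M$ qudits---shows that the moment equality is inherited for every $M \geq M^\ast$. If $|\words{\Gamma^{M^\ast}}| = \infty$, then \corref{cor:criterion} already yields $\K(\Gamma) \leq M^\ast$. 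Otherwise $\words{\Gamma^{M^\ast}}$ is a finite unitary $2$-group, classified by \thmref{thm:unitary-2-groups} into Clifford, Lie-type, or exceptional; in the Lie-type and exceptional cases the first part of the proof gives $M^\ast \leq 3$.

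To finish, I go to $M^{\ast\ast} = \max(M^\ast, 4) \leq d^4(n-1)+1$. The moment equality still holds at this level, while both Lie-type (by the Diophantine analysis, as $M^{\ast\ast} \geq 4$) and exceptional ($M^{\ast\ast} \neq 3$) are excluded from the finite options in \thmref{thm:unitary-2-groups}. Thus any finite $\words{\Gamma^{M^{\ast\ast}}}$ would be projectively Clifford; but this would force $\Gamma \subseteq \Cl_d(n)$ (since each $\gamma \otimes I \in \Gamma^{M^{\ast\ast}}$ would then lie in $\Cl_d(M^{\ast\ast})$), hence $\words{\Gamma^N} \subseteq \Cl_d(N)$, contradicting the hypothesis that $\words{\Gamma^N}$ is projectively Lie-type or exceptional. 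Therefore $|\words{\Gamma^{M^{\ast\ast}}}| = \infty$, and \corref{cor:criterion} yields $\K(\Gamma) \leq M^{\ast\ast} \leq d^4(n-1)+1$. The most delicate point here is the last incompatibility: one must verify that the Lie-type and exceptional unitary $2$-groups from \thmref{thm:unitary-2-groups} cannot appear as subgroups of the relevant Clifford group, which I would settle by comparing projective structures using the explicit descriptions in Bannai et al.\ and Guralnick--Tiep.
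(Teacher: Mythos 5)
Your overall architecture matches the paper's: use the dimension constraints of Theorem~\ref{thm:unitary-2-groups} together with a Diophantine analysis to force $N \leq 3$, then pass to a level at least $4$ where the moment equality persists, exclude every finite alternative there, and invoke Corollary~\ref{cor:criterion}. But two steps you rely on are not actually established. The first is the Diophantine exclusion of Lie-type dimensions $d^M \in \{(3^k\pm1)/2,\,(2^k+(-1)^k)/3\}$ for $M \geq 4$: this is the crux of the proposition, you flag it yourself as ``the main technical obstacle'' without carrying it out, and as sketched the case split is not exhaustive --- for $d$ even but not a power of two (say $d=10$) the quantity $2d^M$ is not a perfect power, so Mihailescu does not apply, and your ``remaining case $d$ odd'' never covers it (the second family has the analogous problem). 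What is really needed is the Nagell--Ljunggren/Lebesgue--Ramanujan--Nagell machinery for an arbitrary base $y$, which is exactly what the paper's Lemma~\ref{lem:lie-type-case} supplies via Bugeaud--Mignotte and Cohn (only $N=2$, $d\in\{2,11\}$ survives); you could simply have cited that lemma instead of re-deriving a weaker, incomplete version of it.

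The second gap is in your exclusion of the extraspecial case at level $M^{\ast\ast}$: from ``$\overline{\words{\Gamma^{M^{\ast\ast}}}}$ is isomorphic to $\overline{\Cl_d(M^{\ast\ast})}$'' you infer $\gamma \otimes I \in \Cl_d(M^{\ast\ast})$ and hence $\Gamma \subseteq \Cl_d(n)$. The classification identifies the group only up to conjugation and central phases, and even granting literal containment in some conjugate $U\Cl_d(M^{\ast\ast})U^{\dagger}$, nothing forces $U$ to respect the tensor factorization you need in order to peel off the identity factor and conclude $\gamma \in \Cl_d(n)$. The detour is also unnecessary: since $\words{\Gamma^N}\otimes I$ embeds in $\words{\Gamma^{M^{\ast\ast}}}$, it suffices to argue, as the paper does, that the Lie-type and exceptional groups of Theorem~\ref{thm:unitary-2-groups} are not isomorphic to subgroups of the Clifford group, which contradicts extraspecialness directly. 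Two smaller remarks: your ``short observation'' that the moment equality is inherited at all higher levels is true but is itself a claim requiring proof (a commutant computation; the paper uses the same fact going from $N\leq 3$ to level $4$), and the appeal to Theorem~\ref{thm:fourthmomentbound} is superfluous here, since $n \leq N \leq 3$ already gives $4 \leq d^4(n-1)+1$, so working at level $4$ suffices.
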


The proof idea is to exploit the dimensional requirements in the exceptional and Lie-type cases of Theorem~\ref{thm:unitary-2-groups} to obtain a restriction on $N$ and $n$ that bounds $\K(\Gamma)$. Of course, the exceptional case is ``maximally restrictive" in the sense that it only applies when $N=3$. Interestingly, the Lie-type case is similar, as the next result implies.

\begin{restatable}{lemma}{lietype}
\label{lem:lie-type-case}
Let $d, N \geq 2$. Then, there exists a positive integer $k$ such that $d^N \in \{(3^k \pm 1) / 2, (2^k + (-1)^k)/3\}$ if and only if $N = 2$ and $d \in \{2,11\}$.
\end{restatable}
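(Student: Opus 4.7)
The plan begins with two simplifications. First, $(2^k + (-1)^k)/3$ is never an integer: since $2 \equiv -1 \pmod{3}$, we have $2^k + (-1)^k \equiv 2(-1)^k \not\equiv 0 \pmod{3}$. Second, for $d^N = (3^k + 1)/2$ with $k$ odd, $v_2((3^k + 1)/2) = 1$ (since $3^k \equiv 3 \pmod 8$), which is incompatible with $v_2(d^N) \in \{0\} \cup \{N, 2N, \dots\}$ whenever $d, N \geq 2$. Hence the analysis reduces to two cases: (A) $d^N = (3^k - 1)/2$, and (B) $d^N = (3^k + 1)/2$ with $k$ even.

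For case (A), I split by the parity of $d$. If $d = 2$, the equation $3^k - 2^{N+1} = 1$ falls under Mihailescu's theorem (Catalan's conjecture), forcing $(k, N+1) = (2, 3)$ and hence $(d, N) = (2, 2)$. If $d$ is odd, a $2$-adic argument (using $v_2(3^k - 1) = v_2(k) + 2$ for $k$ even and $v_2(3^k - 1) = 1$ for $k$ odd) forces $k$ to be odd, whence $d^N = (3^k - 1)/(3 - 1)$ is the Nagell-Ljunggren equation with base $3$; by Ljunggren's 1943 theorem, the unique solution with $k \geq 3$ and $N \geq 2$ is $(k, d, N) = (5, 11, 2)$, i.e.,\ $(d, N) = (11, 2)$. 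If $d \geq 4$ is even, writing $k = 2m$ and analyzing the coprime factorization $(3^{2m} - 1)/2 = 2 \cdot \frac{3^m - 1}{2} \cdot \frac{3^m + 1}{2}$ with $2$-adic valuations reduces the problem to the simultaneous constraints $3^m = 2p^2 + 1 = 4q^2 - 1$, i.e., the Pell equation $p^2 - 2q^2 = -1$ combined with $3^m = 2p^2 + 1$; the classical resolution of the latter forces $m = 1$, yielding no new solutions.

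For case (B), I would use the Gaussian-integer factorization $(3^j + i)(3^j - i) = 2d^N$ (with $k = 2j$). Setting $\alpha = (3^j + i)/(1 + i) = A - Bi$, where $A = (3^j + 1)/2$ and $B = (3^j - 1)/2$, yields $d^N = \alpha \bar\alpha$ with $\gcd(\alpha, \bar\alpha) = 1$ in $\mathbb{Z}[i]$. Modular constraints (namely $d$ odd, $N$ odd, $d \equiv 5 \pmod{12}$, and every prime dividing $d$ congruent to $1 \pmod 4$ because $-1$ must be a quadratic residue modulo each such prime) together with unique factorization force $\alpha = u \gamma^N$ for some unit $u$ and $\gamma \in \mathbb{Z}[i]$ of norm $d$. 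The identity $A - B = 1$ then becomes $\operatorname{Re}((1 - i)u\gamma^N) = 1$, a Thue equation in the real and imaginary parts of $\gamma$; for each odd $N \geq 3$ the only integer solutions have $N(\gamma) = 1$, excluding $d \geq 2$.

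The main obstacle is the Thue step in case (B): proving the equation $\operatorname{Re}((1 - i)u\gamma^N) = 1$ has only trivial solutions uniformly in odd $N \geq 3$. For small $N$ this is immediate (e.g., $N = 3$ factors cleanly as $(a - b)(a^2 + 4ab + b^2) = \pm 1$, forcing $a^2 + b^2 = 1$); for the general case, one combines the explicit polynomial factorization with the growth bound $|\operatorname{Re}((1-i)u\gamma^N)| = O(|\gamma|^N)$, or appeals to Baker-type bounds on linear forms in logarithms. The subcase $d \geq 4$ even in (A) is similarly delicate but reduces cleanly to a well-studied Pell-type problem.
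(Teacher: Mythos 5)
Your elementary reductions are sound (the $2$-adic argument forcing $k$ even in the plus case, the Catalan step for $d=2$, the identification of $(2,2)$ and $(11,2)$), but the proof has genuine gaps exactly at the points where the paper has to import deep Diophantine theorems. In Case (A) with $d$ odd you invoke ``Ljunggren's 1943 theorem'' to cover all exponents $N\geq 2$; Ljunggren's theorem resolves the Nagell--Ljunggren equation $(x^n-1)/(x-1)=y^q$ only for $q=2$. After reducing $N$ to a prime, the case $N\geq 3$ with base $3$ is not classical: it is precisely Theorem 3 of Bugeaud--Mignotte, which is what the paper cites, so as written your argument does not cover $N\geq 3$. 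Your subcase $d\geq 4$ even is also incomplete: the coprime factorization gives $N$-th powers $s^N$ and $2^{aN-1}t^N$ (with $d=2^a u$), not squares, so the reduction to ``$3^m=2p^2+1=4q^2-1$'' only captures $N=2$, $d\equiv 2\pmod 4$, $m$ odd; moreover $3^m=2p^2+1$ alone has solutions $m=1,2,5$, so even the quoted ``classical resolution'' needs the combined system, and nothing is said for $N\geq 3$ or $4\mid d$. In Case (B), your Gaussian-integer factorization is the standard opening move, but the step you yourself flag as the main obstacle---showing the Thue-type equation $\operatorname{Re}((1-i)u\gamma^N)=1$ has only trivial solutions uniformly in odd $N\geq 3$---is the entire content of the case: it is equivalent to the statement the paper gets from Cohn's lemma, namely that $y^2-2z^N=-1$ has no solutions with $z\geq 2$, $N>2$ (the paper reduces Case (B) to $(3^\ell)^2-2d^N=-1$ and cites Cohn). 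A growth bound $O(|\gamma|^N)$ proves nothing here, and gesturing at Baker-type bounds for a family of Thue equations uniform in $N$ is not a proof.

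Separately, your one-line dismissal of the third family deserves a caveat. It is true that $(2^k+(-1)^k)/3$ is never an integer, so against the literal statement your observation is valid; but the paper's own proof (and the underlying classification of unitary $2$-groups it serves) treats $(2^k-(-1)^k)/3$, which \emph{is} always an integer---the plus sign in the statement is evidently a typo. The paper disposes of that family by writing it as $(4^\ell-1)/(4-1)$ or $(2^k+1)/(2+1)$ and citing Bugeaud--Mignotte's Theorems 3 and 2, so in the intended lemma your shortcut leaves a real case unhandled, and closing it again requires the same nonelementary inputs. In summary: the paper's proof is short precisely because it leans on three cited results (two of Bugeaud--Mignotte and one of Cohn) after elementary congruence work; your plan attempts a more hands-on route but, at every point where those results are needed, either misattributes them (Ljunggren for general exponent), treats only a special subcase (the Pell reduction), or leaves the key step open (the uniform Thue step).
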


We prove this in \appendref{append:lie-type-case-proof}. Using it, we can easily prove Proposition~\ref{prop:lietypelemma}.

\begin{proof}[Proof of Proposition~\ref{prop:lietypelemma}]
On one hand, it follows from Lemma~\ref{lem:lie-type-case} that $\words{\Gamma^N}$ is Lie-type only if $N = 2$. On the other hand, it follows from Theorem~\ref{thm:unitary-2-groups} that $\words{\Gamma^N}$ is exceptional only if $N = 3$. In either case, $N \leq 3$. Since $\M_4(\words{\Gamma^N}) = \M_4(\SU(d^N))$, it holds that $\M_4(\words{\Gamma^4}) = \M_4(\SU(d^4))$. Moreover, $\words{\Gamma^4}$ is neither exceptional nor Lie-type, because $4 > 3$, and $\words{\Gamma^4}$ is also not extraspecial, because $\words{\Gamma^N}$, and hence $\words{\Gamma^4}$, is not isomorphic to a subgroup of the Clifford group. These options exhaust the possibilities of $\words{\Gamma^4}$ being finite, so $|\words{\Gamma^4}| = \infty$. Consequently, $\K(\Gamma) \leq 4$. Since $2 \leq n \leq N \leq 3$ and $d \geq 2$, $d^4(n - 1) + 1 \geq 4$. Therefore, $\K(\Gamma) \leq d^4(n-1) + 1$, as desired.
\end{proof}

As a consequence of Proposition~\ref{prop:lietypelemma}, we obtain the following corollary.
\begin{corollary}
\label{cor:infiniteness}
Let $\Gamma$ be an $n$-qu$d$it gate set, where $d, n\geq 2$. If there exists $N \geq n$ such that $N \geq 4$, $\M_4(\words{\Gamma^N}) = \M_4(\SU(d^N))$, and $\words{\Gamma^N}$ is not extraspecial, then $\Gamma$ is eventually universal and $\K(\Gamma) \leq d^4(n-1) + 1$.
\end{corollary}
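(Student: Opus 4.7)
The plan is to combine Theorem~\ref{thm:fourthmomentbound} with the classification in Theorem~\ref{thm:unitary-2-groups} to extract both eventual universality and the quantitative bound on $\K(\Gamma)$. First, I would argue directly at the hypothesized $N \geq 4$: if $\words{\Gamma^N}$ were finite, then the matching fourth moments would make it a unitary $2$-group, and Theorem~\ref{thm:unitary-2-groups} leaves only three options. The exceptional case requires $N = 3$, Lemma~\ref{lem:lie-type-case} forces the Lie-type case to have $N = 2$, and the extraspecial case is excluded by hypothesis; all three are incompatible with $N \geq 4$. Hence $|\words{\Gamma^N}| = \infty$, and Corollary~\ref{cor:criterion} immediately gives that $\Gamma$ is eventually universal with $\K(\Gamma) \leq N$.

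To sharpen this to $\K(\Gamma) \leq d^4(n-1) + 1$, I would apply Theorem~\ref{thm:fourthmomentbound} to extract the smallest $N^* \geq n$ with $\M_4(\Gamma^{N^*}) = \M_4(\SU(d^{N^*}))$, so that $N^* \leq d^4(n-1)+1$. It suffices to show $\Gamma^{N^*}$ is universal, equivalently that $|\words{\Gamma^{N^*}}| = \infty$. If instead $\words{\Gamma^{N^*}}$ is finite, Theorem~\ref{thm:unitary-2-groups} again offers three subcases. In the Lie-type and exceptional subcases, Proposition~\ref{prop:lietypelemma} applies directly and yields $\K(\Gamma) \leq d^4(n-1) + 1$, since both force $N^* \leq 3$ and $d^4(n-1) + 1 \geq 17$ whenever $d, n \geq 2$.

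The main obstacle is the extraspecial subcase at $N^*$, where $\overline{\words{\Gamma^{N^*}}} \cong \overline{\Cl_d(N^*)}$. The key observation is that being extraspecial should propagate upward: once the projective isomorphism is realized by a unitary conjugation on $\CC^{d^{N^*}}$, each generator $\gamma \otimes I$ lies in the projective Clifford group on $N^*$ qu$d$its, and a unitary of the form $\gamma \otimes I$ normalizes the $N^*$-qu$d$it Pauli group if and only if $\gamma$ normalizes the $n$-qu$d$it Pauli group. Since $\Cl_d(M)$ is closed under tensoring with the identity and conjugation by SWAPs, this forces $\words{\Gamma^M}$ to be projectively contained in $\overline{\Cl_d(M)}$ for every $M \geq n$, in particular for $M = N$. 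But then $\words{\Gamma^N}$ is finite and satisfies the moment condition, and Theorem~\ref{thm:unitary-2-groups} applied at $N \geq 4$ leaves only the extraspecial case, contradicting the hypothesis. The delicate technical step I expect to be the main obstacle is this upward propagation itself, i.e., verifying that the abstract projective isomorphism in the extraspecial classification is realized by a unitary conjugation on the ambient representation space; this is implicit in the representation-theoretic structure underlying the extraspecial case of Theorem~\ref{thm:unitary-2-groups} and mirrors the Clifford dichotomy already captured by Proposition~\ref{prop:heinrich}.
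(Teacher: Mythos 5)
Your first paragraph reproduces exactly the argument the paper intends for this corollary: since $N \geq 4$ implies $d^N \geq 16 \geq 5$, a finite $\words{\Gamma^N}$ with matching fourth moment would be a unitary $2$-group, Theorem~\ref{thm:unitary-2-groups} together with Lemma~\ref{lem:lie-type-case} confine the Lie-type and exceptional cases to $N=2$ and $N=3$, and the extraspecial case is excluded by hypothesis; hence $|\words{\Gamma^N}|=\infty$ and Corollary~\ref{cor:criterion} gives eventual universality with $\K(\Gamma)\leq N$. You are also right that this alone does not give $\K(\Gamma)\leq d^4(n-1)+1$ when the hypothesized $N$ exceeds that bound, and that one must then pass to the minimal $N^*$ furnished by Theorem~\ref{thm:fourthmomentbound} and worry about $\words{\Gamma^{N^*}}$ being finite there; the paper glosses over this point (it only ever invokes the corollary at that minimal $N$, and it rules out the extraspecial possibility there by appealing to Proposition~\ref{prop:heinrich} inside the proof of Theorem~\ref{thm:mainthm}).

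The genuine gap is in your resolution of the extraspecial subcase at $N^*$: the upward-propagation claim is asserted rather than proved, and the justification you sketch would fail as stated. The extraspecial case of Theorem~\ref{thm:unitary-2-groups} gives only that $\overline{\words{\Gamma^{N^*}}}\cong\overline{\Cl_d(N^*)}$, i.e., that $\words{\Gamma^{N^*}}$ normalizes \emph{some} extraspecial (Pauli-like) subgroup of $\U(d^{N^*})$ acting irreducibly; equivalently, it lies in a Clifford group only after conjugation by some unitary $V$ of $\CC^{d^{N^*}}$, and $V$ has no reason to respect the tensor split into the first $n$ and the remaining $N^*-n$ qudits. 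Consequently the pivotal step ``a unitary of the form $\gamma\otimes I$ normalizes the $N^*$-qu$d$it Pauli group if and only if $\gamma$ normalizes the $n$-qu$d$it Pauli group'' is not available: it is correct for the standard, tensor-factorized Pauli group, but not for its $V$-conjugates, so you cannot conclude that each generator $\gamma$ is itself (projectively) an $n$-qu$d$it Clifford gate, nor that $\words{\Gamma^M}$ stays projectively Clifford (in particular finite) for all larger $M$, which is what your contradiction requires. You do flag this as the delicate step, but deferring it to structure ``implicit'' in Theorem~\ref{thm:unitary-2-groups} leaves the argument incomplete; this compatibility of the extraspecial structure with adding ancillas is precisely the nontrivial content of Heinrich's result quoted as Proposition~\ref{prop:heinrich}, and the clean way to close your case analysis is to invoke that result (or reproduce its proof), as the paper does, rather than the tensor-factorization argument as written.
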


Altogether, these results prove \thmref{thm:mainthm}. 

\begin{proof}[Proof of \thmref{thm:mainthm}]
If $\K(\Gamma) \leq d^4(n-1) + 1$, then $\Gamma$ is eventually universal because $\K(\Gamma) < \infty$. For the other direction, suppose that $\Gamma$ is eventually universal. Then, by \corref{cor:criterion}, there exists $N \geq n$ such that $\M_4(\Gamma^N) = \M_4(\SU(d^n))$. By \thmref{thm:fourthmomentbound}, the smallest such $N$ satisfies $N \leq d^4(n-1) + 1$. For this $N$, it follows from Proposition~\ref{prop:heinrich} that $\words{\Gamma^N}$ is not extraspecial, because $\Gamma$ is eventually universal. Consequently, by Proposition~\ref{prop:lietypelemma} and Corollary~\ref{cor:infiniteness}, $\K(\Gamma) \leq d^4(n-1) + 1$, as desired.
\end{proof}

\section{Discussion}

In this work, we have improved the previously best known upper bound on the number of ancillary qu$d$its needed for an eventually universal $n$-qu$d$it gate set to exhibit universality. Our methods are similar to Ivanyos' \cite{Iva06}, who gave the first non-trivial upper bound of roughly $d^8n$. By contrast, our upper bound is essentially a quadratic improvement and is roughly $d^4n$.

Our work leaves several questions open. First, it is unclear whether our new bound is optimal. While we have, in a sense, maximally exploited Larsen's alternative in the sense that our methods use the 4th moment function (as opposed to the 8th moment function, like in \cite{Iva06}), it is conceivable that a more nuanced criterion for eventual universality could exist, and that this new criterion could support better upper bounds. 

Second, there is the related question of \emph{lower} bounds on eventual universality. These are known for some gate sets (e.g., those studied in \cite{Jea04}), however they are unknown for more general gate sets. We discuss this in more detail in \appendref{appendix:jeandel}.

Finally, the basic techniques used in this paper are applicable to non-unitary groups as well. Since post-selected quantum circuits are essentially just general linear transformations \cite{Aar04}, it could be interesting to mimic this study but for ``eventual post-selected universality''. 

We hope our work inspires more research in these directions.

\appendix

\section{Proof of \thmref{thm:fourthmomentbound}}
\label{append:fourthmomentsboundproof}

In this section, we will prove \thmref{thm:fourthmomentbound}, which we restate below for convenience. 

\fourthmoment*

Our proof of this uses techniques that are largely inspired by the methods used in \cite{Iva06}.

Recall that, conceptually, $\Gamma^N$ is the set of all $N$-qu$d$it gates formed by applying elements of $\Gamma$ to any subset of $n$ qu$d$its, and then leaving the remaining $N-n$ qu$d$its unchanged. Formally,
$$
\Gamma^N \defeq \left\{\pi (\gamma \otimes I) \pi^{-1} : \gamma \in \Gamma, \pi \in S_N\right\},
$$ 
where $S_N$ is the symmetric group of order $N$. Observe that as $N$ grows, the only aspect of $\Gamma^N$ that changes is the set of available permutations. In particular, the underlying ``fundamental" gates $\gamma \in \Gamma$ are independent of $N$. This suggests that there is a way to separate the behavior of $\Gamma^N$ as given by the elements of $\Gamma$ from the behavior of $\Gamma^N$ as given by the permutations $S_N$. Indeed, this is the essential idea underlying the following result.

\begin{lemma}(Lemma 4 in \cite{Iva06})
\label{lem:swaplem}
    Let $d,n \geq 2$ and $N \geq n$, let $\Gamma$ be an $n$-qu$d$it gate set, and let $\Sigma_N$ be a generating set of $S_N$. Then, $\langle \Gamma^N \rangle$ is dense in $\SU(d^N)$ if and only if $\langle \left( \Gamma \otimes I_{N - n} \right) \cup \Sigma_N \rangle$ is dense in $\SU(d^N)$.
\end{lemma}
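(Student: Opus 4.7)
The plan is to set $H \defeq \langle \Gamma^N \rangle$ and $K \defeq \langle (\Gamma \otimes I_{N-n}) \cup \Sigma_N \rangle$ and show that their closures in $\SU(d^N)$ coincide. One direction is essentially immediate: since $\Sigma_N$ generates $S_N$, every permutation operator $\pi$ is a word in the operators from $\Sigma_N$, so every generator $\pi(\gamma \otimes I)\pi^{-1}$ of $H$ is already a word in $(\Gamma \otimes I_{N-n}) \cup \Sigma_N$. Thus $H \subseteq K$, which yields the forward implication: if $\overline{H} = \SU(d^N)$, then $\overline{K} = \SU(d^N)$ as well.

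For the reverse direction my plan is to exploit normality. Specifically, I would verify that $H$ is a normal subgroup of $K$ by checking the action of each of the two families of generators of $K$ on $H$. The generators $\gamma \otimes I$ lie in $H$ themselves (take $\pi = e$ in the definition of $\Gamma^N$), so conjugation by them preserves $H$. For $\sigma \in \Sigma_N$, the one-line computation
\[
\sigma\cdot\pi(\gamma \otimes I)\pi^{-1}\cdot\sigma^{-1} = (\sigma\pi)(\gamma \otimes I)(\sigma\pi)^{-1} \in \Gamma^N \subseteq H
\]
shows that $\sigma$ normalizes $H$. Hence all of $K$ normalizes $H$, and continuity of conjugation promotes this to $\overline{K}$ normalizing $\overline{H}$. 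In particular, if $\overline{K} = \SU(d^N)$, then $\overline{H}$ is a closed normal subgroup of $\SU(d^N)$.

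To finish I would invoke the structure theory of $\SU(d^N)$: since its Lie algebra $\mathfrak{su}(d^N)$ is simple for $d^N \geq 2$, the only closed normal subgroups of $\SU(d^N)$ are subgroups of its finite center and $\SU(d^N)$ itself. The central alternative is easily ruled out: if $\overline{H}$ were contained in the center, then each $\gamma \otimes I \in \overline{H}$ would have to be a scalar matrix, forcing every $\gamma \in \Gamma$ to be a scalar. But then $K$ would be contained in the finite group generated by the center of $\SU(d^N)$ together with the permutation operators corresponding to the elements of $\Sigma_N$, contradicting $\overline{K} = \SU(d^N)$. Hence $\overline{H} = \SU(d^N)$, as desired. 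The main obstacle in this plan is simply pinning down the closed normal subgroup structure of $\SU(d^N)$ cleanly and dispatching the central alternative; the remainder is routine bookkeeping about generating sets and the continuity of the group operations.
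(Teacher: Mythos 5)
Your proof is correct. Note that the paper itself does not prove this lemma---it is quoted verbatim as Lemma~4 of Ivanyos---so there is no in-paper argument to compare against; judged on its own, your argument is sound and essentially the standard one. The forward inclusion is immediate as you say, and your conjugation identity in fact shows that conjugation by an \emph{arbitrary} permutation operator (in particular by $\sigma^{-1}$ for $\sigma \in \Sigma_N$) maps $\Gamma^N$ onto itself; that is the observation that upgrades your one-sided containment to genuine membership of all of $\langle (\Gamma \otimes I_{N-n}) \cup \Sigma_N \rangle$ in the normalizer of $\langle \Gamma^N \rangle$, and it is worth saying explicitly. Passing to closures by continuity is routine, the classification of closed normal subgroups of $\SU(d^N)$ (central or everything, by simplicity of $\mathfrak{su}(d^N)$) is correct, and your elimination of the central alternative (scalar $\gamma$'s would force the larger group to be finite, hence not dense) closes the argument. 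Two small remarks. First, a slightly more economical ending is available and is closer in spirit to Ivanyos': since the larger group equals $\langle \Gamma^N\rangle$ times the finite group generated by the permutation operators (push permutations to one side using normality), the closure of $\langle \Gamma^N \rangle$ has finite index in the closure of the larger group, and a connected group such as $\SU(d^N)$ has no proper closed subgroup of finite index; this bypasses the center discussion entirely. Second, for some $d,N$ the permutation operators have determinant $-1$, so the right-hand group need not lie inside $\SU(d^N)$ and ``dense in $\SU(d^N)$'' must be read as ``closure contains $\SU(d^N)$''; this is a quirk of the statement rather than of your proof, and your argument is unaffected by it.
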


Consequently, we can think of $\Gamma^N$ as simply a gate set consisting of the elements of $\Gamma$ together with a generating set of all permutations over the $N$ qu$d$its (e.g., the set of all pairwise qu$d$it SWAP gates). We adopt this interpretation of $\Gamma^N$ for the remainder of this section. 

Ultimately, this interpretation of $\Gamma^N$ will allow us to relate $\Gamma$ to a particular \textit{polynomial ideal} $J(\langle \Gamma \rangle)$ whose degree $N$ part $J_N( \langle \Gamma \rangle)$ will ``correspond'' to $\Gamma^N$. The essential idea for this comes from \textit{invariant theory}. 

To be more precise, let $m$ be an positive integer, and consider the $\CC$-vector space $\CC^{m}$ with dual space $(\CC^m)^* \defeq \Hom_{\CC}(\CC^m,\CC)$. It is an elementary fact that if $G$ is a subgroup of the general linear group $\GL(m, \CC)$, then $\CC^m$ is a left $\CC[G]$-module and $(\CC^m)^*$ is a right $\CC[G]$-module. This means that $\CC^m$ ($(\CC^m)^*$) is also an abelian group that admits left (right) scalar multiplication by elements of $\CC[G]$, the ring of polynomials with coefficients from $\CC$ and variables from $G$. 

Now suppose $G \leq \GL(m, \CC)$ acts on $\CC^m$, and let $R = \CC[x_1, \ldots, x_m]$ be the (commutative) polynomial ring on the variables $\{ x_1, \ldots, x_m \}$. Then, a polynomial $f(\mathbf{x}) \in R$ is said to be \emph{invariant} under $G$ if and only if $f(\mathbf{x}) = f(g\mathbf{x})$ for all $g \in G$. The \emph{invariant subring} of $G$, denoted $R^G$, is the subring of $R$ consisting of all the polynomials that are invariant under $G$. 

Interestingly, as the next theorem shows, for almost all positive integers $N$, the dimension of the invariant homomorphism space of $G$, i.e., $\dim_{\CC} \Hom_{\CC[G]}((\CC^m)^{\otimes N}, \CC)$, equals the size of the ``slice" of degree $N$ elements of the invariant subring $R^G$. 

\begin{theorem}[Section 3.1 in \cite{Iva06}]
\label{thm:polynomial_ideal}
Let $m$ and $n$ be positive integers, let $W = \CC^m$, let $\Gamma$ be a finite generating set of $G \leq \GL(m, \CC)$, and let $R = \CC[x_1, \ldots, x_m]$. Then, there exists a polynomial ideal $J(G) \subseteq R$, generated by homogeneous polynomials of degree $n$, such that for every $N \geq n$, 
$$
\dim_{\CC} \Hom_{\CC[\langle \Gamma^N \rangle]} \left(W^{\otimes N}, \CC \right) = \dim (R_N/J_N(G)),
$$
where $R_N$ and $J_N(G)$ denote the degree $N$ ``slice" of $R$ and $J(G)$, respectively, i.e., the homogeneous polynomials in $R$ and $J(G)$, respectively, with total degree $N$, including the zero polynomial.\footnote{Ivanyos gives this result in terms of the dimension of the quotient ring $\dim (R_N/J_N(G))$, but this is equivalent to the dimension of the invariant subring $\dim R^G_N$ due to duality, see \cite{Greub78}.}
\end{theorem}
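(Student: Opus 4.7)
The plan is to realize both sides of the claimed equality as dimensions of a single invariant-theoretic object, and then to construct $J(G)$ as an ideal engineered to make the equality hold.

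First, I would dualize the hom space. By the tensor-hom adjunction,
\begin{equation*}
\Hom_{\CC[\langle \Gamma^N \rangle]}(W^{\otimes N}, \CC) \;\cong\; \bigl[(W^*)^{\otimes N}\bigr]^{\langle \Gamma^N \rangle}.
\end{equation*}
Because $\langle \Gamma^N \rangle$ contains the symmetric group $S_N$ permuting the $N$ tensor factors (the content of \lemref{lem:swaplem}), every such invariant functional is $S_N$-symmetric and therefore lies in $\mathrm{Sym}^N(W^*) = R_N$. The residual constraint is that the functional is also annihilated by each generator $\gamma \in \Gamma$ acting in a single tensor slot, which, since $\Gamma$ generates $G$, amounts to slotwise $G$-invariance. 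Thus the left-hand side is the dimension of a well-defined $G$-invariant subspace of $R_N$.

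Second, I would match this invariant space against $R_N / J_N(G)$ for an appropriate ideal. The natural choice of $J(G)$ is the homogeneous ideal whose degree-$n$ piece $J_n(G) \subseteq R_n$ is defined as a $G$-stable complement (with respect to a $G$-invariant inner product on $R_n$) of the subspace corresponding to the symmetric $G$-invariants, and whose higher-degree pieces are the multiplicative closure $R \cdot J_n(G)$. The equality $\dim R_N/J_N(G) = \dim \Hom_{\CC[\langle \Gamma^N \rangle]}(W^{\otimes N},\CC)$ for $N \geq n$ is then a standard coinvariant / invariant duality, which is what the footnote referencing Greub invokes.

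The main obstacle is the degree bound: showing that the ideal's generators can indeed be taken in degree $n$, and in particular that the formula stabilizes from $N = n$ onward rather than failing for some intermediate $N$. This hinges on a Noether-type theorem for the invariant ring $R^G$: for finite $G$ it is Noether's classical $|G|$-bound, for reductive algebraic $G$ one uses Weyl's and Hilbert's theorems on finite generation of invariants, and for arbitrary linear $G$ one passes to the Zariski closure and invokes reductivity there. Calibrating the exact value of $n$ so that it matches the parameter appearing in downstream applications (such as the qudit-count dependence in \thmref{thm:fourthmomentbound}) is the delicate book-keeping that, I expect, occupies the bulk of Section~3.1 of \cite{Iva06}.
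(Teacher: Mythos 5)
Your opening move is the right one and is in the spirit of Ivanyos' argument (which this paper does not reprove but imports from Section~3.1 of \cite{Iva06}): dualize to $\bigl[(W^*)^{\otimes N}\bigr]^{\langle \Gamma^N\rangle}$ and use the SWAPs (under the interpretation of $\Gamma^N$ adopted in Appendix~A) to land in $\mathrm{Sym}^N(W^*)=R_N$. But your construction of $J(G)$ is not well-posed, and it hides a conceptual confusion about the setup. Each $\gamma\in\Gamma$ acts on a \emph{block of $n$ tensor factors} (in the application, $\gamma$ is an $n$-qudit gate acting on $W^{\otimes n}$ with $W\cong\CC^{d^4}$), not ``in a single tensor slot''; if it acted in a single slot the ideal would be generated in degree $1$ and the parameter $n$, hence the final bound $d^4(n-1)+1$, would never appear. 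Because the block action does not commute with the permutation of factors, $G$ does not act on $R_n=\mathrm{Sym}^n(W^*)$ at all, so ``a $G$-stable complement with respect to a $G$-invariant inner product on $R_n$'' has no meaning. The correct degree-$n$ generator space is the image, under the symmetrization map $s_n\colon W^{\otimes n}\to \mathrm{Sym}^n(W)$, of $\mathrm{span}\{\gamma v - v : \gamma\in\Gamma,\ v\in W^{\otimes n}\}$ (which depends only on $G$, since $ghv-v=(g(hv)-hv)+(hv-v)$); one can then check, using an inner product invariant under both $G$ and $S_n$, that this agrees with the orthocomplement of the symmetric invariants inside the degree-$n$ piece, but that identification itself requires complete reducibility of the block action and a short argument, not a definition.

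More importantly, the substantive content of the theorem is exactly the step you wave through as ``a standard coinvariant/invariant duality'': one must show that for \emph{every} $N\geq n$ the degree-$N$ slice of the ideal generated by that degree-$n$ space equals $s_N\bigl(\mathrm{span}\{(\gamma\otimes I^{\otimes(N-n)})v-v : v\in W^{\otimes N}\}\bigr)$, whose annihilator in $R_N$ is the invariant space computing the left-hand side. This is proved by writing $v$ as a sum of tensors $u\otimes w_{n+1}\otimes\cdots\otimes w_N$ with $u\in W^{\otimes n}$ and using that symmetrization is multiplicative, so $s_N\bigl((\gamma\otimes I)v-v\bigr)=s_n(\gamma u-u)\cdot w_{n+1}\cdots w_N$, and conversely; the Greub footnote only concerns trading $\dim(R_N/J_N)$ for $\dim R_N^G$ and does not supply this. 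Finally, your third paragraph misdiagnoses the difficulty: no Noether $|G|$-bound or Hilbert/Weyl finiteness theorem is needed or even applicable (the groups of interest are typically infinite, and a bound depending on $|G|$ would be useless); the ideal is generated in degree $n$ \emph{by construction}, because each gate touches only $n$ factors, and there is no ``calibration'' of $n$ to be done.
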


This equivalence is the key to understanding how $\M_4( \Gamma^N)$ behaves as a function of $N$, where, recall, $\M_4(\Gamma^N)$ is our notational shorthand for $\M_4(\mathrm{cl}(\langle\Gamma^N\rangle))$. To see how this works, we start by revisiting the definition of the $2k$th moment of $G$, $\M_{2k}(G)$, which we originally defined in terms of a particular integral over the Haar measure on a compact group $G \leq \GL(m, \CC)$, 
$$
\M_{2k}(G) = \int_{g \in G} |\tr(g)|^{2k}\mu_{\Haar}(G).
$$
However, as explained in detail in \cite{Kowalski17}, and as discussed in \cite{Katz04, Qin21}, there is in fact a natural generalization of these moment functions to non-compact $G$. In particular, $\M_{2k}(G)$ has a more general interpretation as the dimension of a particular invariant space, namely, the space of $\CC[G]$-module homomorphisms from $(\CC^m \otimes (\CC^m)^*)^{\otimes k}$ to $\CC$,
$$
\M_{2k}(G) \defeq \dim_{\CC}\Hom_{\CC[G]}\left((\CC^m \otimes (\CC^m)^*)^{\otimes k}, \CC\right).
$$ 
We note that this abstract form of the moment function is precisely how Larsen's Alternative generalizes to non-compact groups. 

Combining this more general definition of the moment function with \thmref{thm:polynomial_ideal}, if $G = \langle{\Gamma^N}\rangle \leq \GL(d, \CC)$ and $W = \CC^{d} \otimes (\CC^d)^*$ so that $W^{\otimes 2} \cong \CC^{d^4}$, then for all $N \geq n$,
\begin{align*}
\M_4(\Gamma^N) &= \dim_{\CC}\Hom_{\CC[\words{\Gamma^N}]}\left(W^{\otimes 2}, \CC\right)\\
&= \dim \left(R_N/J_N(\langle \Gamma \rangle)\right),
\end{align*}
where $R = \CC[x_1, \ldots, x_{d^4}]$. Since $\M_4(\SU(d^N)) = 2$ for all $N \geq 2$ \cite{Heinrich21}, we get that $\M_4( \Gamma^N ) = \M_4(\SU(d^N))$ if and only if $\dim \left( R_N/J_N(\langle \Gamma \rangle) \right) = 2$. Therefore, to prove \thmref{thm:fourthmomentbound}, it suffices to determine the smallest $N_0$ such that for all $N \geq N_0$, $\dim \left(R_N/J_N(\langle \Gamma \rangle) \right) = 2$ (assuming, of course, that such an $N_0$ even exists).

We have now recast the proof of \thmref{thm:fourthmomentbound} into a question about the quotient of particular polynomial ideal. Therefore, we can use some tools from algebraic geometry for assistance. For a homogeneous ideal $J$, the map $N \mapsto \dim (\CC[x_1, \ldots , x_m]_N/J_N)$ is called the \textit{Hilbert function of $J$}, and it is typically denoted as $HF_J(N)$. Importantly, the Hilbert function is always ``eventually" polynomial. In other words, for all homogeneous ideals $J$, there exists a polynomial $HP_J$ (called the \emph{Hilbert polynomial of $J$}) and an integer $N_0$ such that for all $N \geq N_0$, $HF_J(N) = HP_J(N)$. The smallest $N_0$ with this property is called the \textit{index of regularity of $J$}.

In this language, then, if there exists $N \geq n$ such that $\M_4( \Gamma^N ) = \M_4(\SU(d^N)) = 2$, then the Hilbert polynomial of the ideal $J(\langle \Gamma \rangle)$ is simply the degree-$0$ polynomial $HP_{J(\langle \Gamma \rangle)}(N) = 2$.

Finally, in the particular case that the Hilbert polynomial of an ideal $J \subseteq \CC[x_1, \ldots, x_m]$ is \textit{constant}, Lazard proved that if $J$ is generated by homogeneous polynomials of degree $n$, then the index of regularity is bounded above by $m(n-1)+ 1$ \cite{Lazard81, Lazard01}. In our case, $J(\langle \Gamma \rangle)$ is a polynomial ideal in $R = \CC[x_1, \ldots, x_{d^4}]$, and is indeed generated by homogeneous polynomials of degree $n$. Therefore, Lazard's bound shows that if there exists an $N$ such that $\M_4( \Gamma^N ) = \M_4(\SU(d^N)) = 2$, then $N \leq d^4(n-1) + 1$. This completes the proof of \thmref{thm:fourthmomentbound}.

\section{Proof of Lemma~\ref{lem:lie-type-case}}
\label{append:lie-type-case-proof}

In this section, we will prove Lemma~\ref{lem:lie-type-case}, which we restate below for convenience. 

\lietype*

Our proof relies on three lemmas to do with the theory of Diophantine equations.

\begin{lemma}[Theorem 3 in \cite{BM02}]
\label{BM02_thm3}
The equation 
$$
y^q = \frac{x^n - 1}{x - 1}
$$
has only three solutions in integers with $2 \leq x \leq 10^6$, $y > 1$, $n > 2$, and $q \geq 2$, namely, $(x,y,n,q)$ is either $(3,11,5,2)$, $(7,20,4,2)$, or $(18,7,3,3)$.
\end{lemma}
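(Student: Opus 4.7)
The equation $y^q = (x^n-1)/(x-1)$ is the classical Nagell--Ljunggren equation, and proving it has only the three listed solutions even in the restricted range $2 \leq x \leq 10^6$ requires substantial Diophantine machinery. My plan is to follow the standard attack on this equation: reduce to the case of prime $q$, split off the well-studied sub-case $q = 2$, apply cyclotomic and primitive-prime-divisor considerations for odd $q$, and then invoke effective estimates for linear forms in logarithms to bound the remaining parameters to a finite region suitable for a computer search.

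First I would reduce to $q$ prime (a solution with composite $q$ yields one with $q$ prime) and handle $q = 2$ separately. Writing the equation as $y^2(x-1) = x^n - 1$, one uses Ljunggren-type descent together with the cyclotomic factorization $x^n - 1 = \prod_{d \mid n} \Phi_d(x)$ to force every prime factor of $(x^n-1)/(x-1)$ to appear to an even power; combined with $2$-adic valuation arguments and, for small $n$, analysis of the corresponding curve $y^2 = 1 + x + \dots + x^{n-1}$ of positive genus, this isolates $(3,11,5,2)$ and $(7,20,4,2)$. For odd prime $q$, the Bang--Zsygmondy theorem supplies a primitive prime divisor $p$ of $x^n - 1$ whenever $n \geq 7$ (with a short list of exceptions); such $p$ divides exactly $\Phi_n(x)$ and must occur to an exponent divisible by $q$ there, which forces $p \equiv 1 \pmod{n}$ and $p \leq $ an explicit function of $n$, a rigid condition that eliminates most pairs $(n,q)$.

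The analytic heart is a bound on a linear form in two logarithms. From $y^q = (x^n - 1)/(x-1)$ one extracts
$$
\bigl| q \log y - (n-1) \log x \bigr| = O\!\left(x^{-(n-1)}\right),
$$
and a Laurent--Mignotte--Nesterenko-style lower bound on this form gives an effective inequality roughly of the shape $q \leq C_1 \log x \cdot \log n$. Combined with the hypothesis $x \leq 10^6$, this confines $(q,n)$ to an explicit finite rectangle. The remaining triples are then eliminated by a computer-assisted search: for each admissible $(x,n,q)$, test whether $(x^n-1)/(x-1)$ is a perfect $q$-th power by reducing modulo a handful of auxiliary primes, which discards almost all cases essentially for free, followed by a direct check on the few survivors.

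The main obstacle, and the reason the hypothesis $x \leq 10^6$ is indispensable, is that the full Nagell--Ljunggren conjecture (asserting these are the only solutions for all $x$) remains open. Baker's theorem alone yields astronomical ranges for $(q,n)$ when $x$ is unrestricted; only the combination of a moderate bound on $x$, sharper variants of linear forms in logarithms, and careful cyclotomic sieving reduces the search space to something genuinely finite. Implementing that enumeration without arithmetic overflow, and certifying the absence of any fourth solution across the entire parameter lattice, is where the real technical burden of the proof lies.
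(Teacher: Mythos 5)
The paper does not prove this lemma; it is imported verbatim as Theorem~3 of the cited reference \cite{BM02} (Bugeaud--Mignotte) and used as a black box in the proof of Lemma~\ref{lem:lie-type-case}. So there is no in-paper argument to compare against---the intended ``proof'' here is the citation itself, and reconstructing the full Diophantine analysis is neither attempted nor required by this paper.

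That said, your sketch of how \cite{BM02} might be proved does contain a concrete error worth flagging. You claim that from $y^q=(x^n-1)/(x-1)=1+x+\cdots+x^{n-1}$ one obtains
\[
\bigl|q\log y-(n-1)\log x\bigr|=O\bigl(x^{-(n-1)}\bigr),
\]
but in fact $q\log y-(n-1)\log x=\log\bigl(1+x^{-1}+\cdots+x^{-(n-1)}\bigr)$, which is of order $1/x$, \emph{not} $x^{-(n-1)}$. A quantity of size $1/x$ is nowhere near small enough for a lower bound on a linear form in two logarithms to produce a usable constraint on $(q,n)$; the smallness you need only appears after clearing the denominator, e.g.\ from $y^q(x-1)=x^n-1$ one gets $q\log y+\log(x-1)-n\log x=\log(1-x^{-n})=O(x^{-n})$, which is a linear form in \emph{three} logarithms, and carrying that through with constants tight enough to close the gap under the hypothesis $x\le 10^6$ is precisely the technical weight of the Bugeaud--Mignotte paper. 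As written, your linear-form step would not deliver the finite parameter box you need, so the sketch does not constitute a proof; for the purposes of this paper the correct move is simply to cite the result.
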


\begin{lemma}[Theorem 2 in \cite{BM02}]
\label{BM02_thm2}
The equation 
$$
y^q = \frac{x^n + 1}{x + 1}
$$
has no solution in integers with $2 \leq x \leq 10^4$, $n \geq 5$ odd, $y >1$, and $q \geq 2$.
\end{lemma}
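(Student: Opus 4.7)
The plan is to combine the cyclotomic factorization of $(x^n+1)/(x+1)$ with effective bounds from the theory of linear forms in logarithms, and to close out the remaining finite set of cases by direct computation over the bounded range $x \leq 10^4$.

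First, I would exploit the factorization
\[
\frac{x^n + 1}{x+1} \;=\; \prod_{d \mid n,\; d > 1} \Phi_{2d}(x),
\]
which holds for every odd $n$, where $\Phi_m$ denotes the $m$-th cyclotomic polynomial. By a classical result of Apostol, distinct cyclotomic polynomials evaluated at an integer share only primes dividing $n$. Hence, if the product on the right equals a perfect $q$-th power $y^q$, each individual factor $\Phi_{2d}(x)$ must itself be a $q$-th power up to a correction by small primes dividing $n$. Inducting on the number of prime divisors of $n$ then reduces the problem to the case $n = p$ an odd prime with $p \geq 5$, in which case $(x^p+1)/(x+1) = \Phi_{2p}(x)$.

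Second, for $n = p$ an odd prime, I would apply effective lower bounds on linear forms in two logarithms (Baker's theorem, sharpened by Laurent, Mignotte, and Nesterenko). Writing $\Phi_{2p}(x) = x^{p-1}\bigl(1 - x^{-1} + x^{-2} - \cdots\bigr)$, the identity $\Phi_{2p}(x) = y^q$ yields
\[
\bigl| q\log y - (p-1)\log x \bigr| \;=\; O(1/x).
\]
A Baker-type lower bound on this linear form in $\log x$ and $\log y$ forces $p$ (and hence $q$) to be bounded by an explicit quantity $N_0(x)$. For $x \leq 10^4$ this bound is small enough that the finitely many remaining triples $(x, p, q)$ can be enumerated, after which one checks directly that $\Phi_{2p}(x)$ is not a perfect $q$-th power in any of them; the composite-$n$ cases inherit their non-solvability from the prime case via the first step.

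The main obstacle will be making the Baker-type estimate sharp enough for the computational sweep to be practical: the numerical constants appearing in linear-forms-in-logarithms bounds are notoriously large, so careful optimization (choice of auxiliary parameters, combined $p$-adic and archimedean analyses, and sharp height bounds for the algebraic numbers involved) is needed to keep $N_0(x)$ manageable. This optimization is the central technical task, and it is ultimately what limits the statement to the range $x \leq 10^4$.
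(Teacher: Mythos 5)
This lemma is not proved in the paper at all: it is stated purely as a citation to Bugeaud and Mignotte's work on the Nagell--Ljunggren equation (the reference \cite{BM02}), so there is no in-paper argument to compare your proposal against. What you have written is a plausible \emph{reconstruction} of the kind of proof that appears in that literature, and it does capture the two main ingredients that Bugeaud and Mignotte actually use: effective lower bounds for linear forms in two logarithms (Laurent--Mignotte--Nesterenko) to bound the exponents, followed by a finite computational sweep over the restricted range of $x$.

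That said, several steps in your reduction are under-specified in ways that would matter if you actually tried to carry this out. Your factorization $\frac{x^n+1}{x+1}=\prod_{d\mid n,\,d>1}\Phi_{2d}(x)$ (for odd $n$) is correct, and Apostol's resultant computation does show that two such cyclotomic values share at most one prime, which must divide $n$. But the jump from ``the factors are almost coprime'' to ``each factor is a $q$-th power up to primes dividing $n$, so we may induct down to $n$ prime'' is the hard part of the classical Nagell--Ljunggren reduction, not a formality: one has to control exactly to what power those exceptional primes can appear in each $\Phi_{2d}(x)$ (this is where the ``lifting-the-exponent''-type lemmas enter), and one has to handle the interaction between the exponent $n$ and the exponent $q$ (the standard reduction in this literature drives toward the case where \emph{both} $n$ and $q$ are prime, not just $n$). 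Presenting this as a one-line induction on the number of prime divisors of $n$ glosses over the part of the argument where most of the known partial results were actually fought for. The Baker-type step is stated correctly in outline---$\Phi_{2p}(x)=y^q$ does give $|q\log y-(p-1)\log x|\ll 1/x$, and a two-log lower bound then caps $p$ and $q$ in terms of $x$---but whether the resulting cap is small enough to enumerate for all $2\le x\le 10^4$ is precisely the delicate quantitative work of \cite{BM02}, and you correctly identify it as the bottleneck. In short: the roadmap is in the right spirit, but the reduction-to-prime-exponent step is a genuine gap as written, and the whole thing is closer to a survey of the method than a proof.
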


\begin{lemma}[Lemma in \cite{Cohn94}]
\label{Cohn}
The equation $y^2 - 2z^k = - 1$ has only two solutions in integers with $k > 2$, namely, $(y,z,k)$ is either $(239, 13, 4)$ or $(1,1,k)$.
\end{lemma}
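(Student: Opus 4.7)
The plan is to treat each of the three forms $d^N = (3^k - 1)/2$, $d^N = (3^k + 1)/2$, and $d^N = (2^k + (-1)^k)/3$ separately, with the aim of showing that the only $(d, N)$ that can arise (given $d, N \geq 2$ and $k$ a positive integer) are $(2, 2)$ and $(11, 2)$. The third form is dispatched at once: since $2 \equiv -1 \pmod{3}$, we have $2^k + (-1)^k \equiv 2(-1)^k \not\equiv 0 \pmod{3}$, so $(2^k + (-1)^k)/3$ is never an integer, and hence can never equal $d^N$.

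For $(3^k - 1)/2 = d^N$, I rewrite the equation as $(3^k - 1)/(3 - 1) = d^N$, which matches the shape of Lemma~\ref{BM02_thm3} with $x = 3$, $n = k$, $y = d$, $q = N$. The cases $k = 1$ and $k = 2$ are handled directly: $k = 1$ gives $d^N = 1$, impossible under $d, N \geq 2$, while $k = 2$ gives $d^N = 4 = 2^2$, forcing $(d, N) = (2, 2)$. For $k \geq 3$, Lemma~\ref{BM02_thm3} applies (since $x = 3$ is trivially in range) and yields the unique solution $(x, y, n, q) = (3, 11, 5, 2)$, i.e., $(d, N, k) = (11, 2, 5)$.

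For $(3^k + 1)/2 = d^N$, I split on the parity of $k$. When $k$ is odd, I use a $2$-adic argument: since $3^2 \equiv 1 \pmod{8}$ and $3 \equiv 3 \pmod{8}$, one has $3^k \equiv 3 \pmod{8}$, and hence $(3^k + 1)/2 \equiv 2 \pmod{4}$. But $d^N \equiv 2 \pmod{4}$ with $N \geq 2$ is impossible, since an even $d$ would force $v_2(d^N) \geq 2$. When $k = 2m$ is even, the equation becomes $(3^m)^2 + 1 = 2 d^N$. For $N = 2$, reducing mod $3$ yields $1 \equiv 2 d^2 \pmod{3}$, i.e., $d^2 \equiv 2 \pmod{3}$, contradicting the fact that quadratic residues mod $3$ lie in $\{0, 1\}$. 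For $N \geq 3$, I rewrite the equation as $(3^m)^2 - 2 d^N = -1$ and invoke Lemma~\ref{Cohn} with $y = 3^m$, $z = d$, $k = N$; the two admissible solutions $(239, 13, 4)$ and $(1, 1, N)$ both fail, since $239$ is not a power of $3$ and $y = 1$ would force $m = 0$, contradicting $k \geq 2$.

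The main obstacle is really just the bookkeeping: organizing the case split cleanly and verifying each lemma's hypotheses. The three cited Diophantine lemmas do all the heavy lifting, while the residual edge cases ($k \leq 2$ for the first form, $k$ odd for the second, and $N = 2$ inside the Cohn step) fall to elementary congruence or $2$-adic valuation arguments.
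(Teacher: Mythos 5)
Your proof is not a proof of the statement you were assigned. The assigned statement is Lemma~\ref{Cohn}, a Diophantine result about the equation $y^2 - 2z^k = -1$ that the paper cites verbatim from \cite{Cohn94} and neither proves nor reproves. What you have written instead is a proof of Lemma~\ref{lem:lie-type-case} — the classification of $(d,N)$ for which $d^N$ takes a Lie-type value — which \emph{invokes} Lemma~\ref{Cohn} as a black box in its second case. These are different statements, and proving the application does not yield the cited Diophantine theorem; the latter belongs to the Ljunggren–Nagell–Cohn circle of ideas and requires its own, substantially harder, argument.

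If the intended target was in fact Lemma~\ref{lem:lie-type-case}, your argument is structurally close to the paper's but differs in two ways. In Case 2 ($d^N = (3^k+1)/2$) you split directly on the parity of $k$ and eliminate the odd-$k$ branch with a clean $2$-adic valuation argument (that $(3^k+1)/2 \equiv 2 \pmod 4$ is impossible for $d^N$ with $N \geq 2$), then handle even $k$ by a mod-$3$ reduction for $N=2$ and Lemma~\ref{Cohn} for $N \geq 3$; the paper instead walks a longer sequence of congruences mod $8$ and mod $3$ to force $N$ odd and $k$ even before invoking Lemma~\ref{Cohn}, and excludes $N=2$ from the already-established parity. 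Both routes are sound, and yours is shorter. In Case 3, though, you dispatch $(2^k + (-1)^k)/3$ by noting it is never an integer. That observation is correct for the literal statement, but it silently skips the substantive case the paper's own proof treats, namely $d^N = (2^k - (-1)^k)/3$, which requires Lemma~\ref{BM02_thm3} and Lemma~\ref{BM02_thm2}. Both the statement of Lemma~\ref{lem:lie-type-case} and Theorem~\ref{thm:unitary-2-groups} evidently carry a sign typo, and the intended form is $(2^k - (-1)^k)/3$; a complete proof must handle that case rather than observe the typo makes it vacuous. Minor: in your Cohn step, the $(1,1,k)$ branch gives $m=0$, hence the original $k = 2m = 0$, which violates the positivity of $k$ rather than your stated $k \geq 2$.
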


We now prove Lemma~\ref{lem:lie-type-case}.

\begin{proof}[Proof of Lemma~\ref{lem:lie-type-case}]
We will show that $N = 2$ and $d \in \{2,11\}$ are the only possibilities via a case-by-case study.

\

\textbf{Case 1:} Suppose $d^N = (3^k - 1)/2$. Then,
$$
d^N = \frac{3^k - 1}{3-1}.
$$
By Lemma~\ref{BM02_thm3}, the only solution in integers to this equation with $k \geq 2$ is $(d,N,k) = (11,2,5)$. If $k = 2$, then the only solution is $(d,N,k) = (2,2,2)$. By Lemma~\ref{BM02_thm3}, these are the only solutions.

\

\textbf{Case 2:} Suppose $d^N = (3^k + 1)/2$. Then, 
$$
2d^N = 3^k + 1.
$$
We will show that there are no integer solutions in $d, N,$ and $k$ with $d,N \geq 2$. If $d$ is even, then $2d^N = 0 \pmod{8}$, however $3^k + 1 \in \{2,4\} \pmod{8}$. Therefore, $d$ is odd. Since $3^k + 1 = 1 \pmod{3}$, $2d^N = 1 \pmod{3}$ as well, so $d^N \equiv 2 \pmod{3}$. Consequently, $d \not\in \{0, 1\} \pmod {3}$, which is to say that $d = 2 \pmod{3}$. Therefore, $d^N = 2^N = 2 \pmod{3}$, which implies that $N$ is odd. Since $d$ is also odd, $d^N \in \{1,3,5,7\} \pmod{8}$, so $2d^N \in \{2,6\} \pmod{8}$. However, $3^k + 1 = 2 \pmod{8}$ if $k$ is even, and $3^k + 1 = 4 \pmod{8}$ if $k$ is odd. Thus, $k$ is even, so $k = 2\ell$ for some integer $\ell$. Rearranging the equation in Case 2, we get that $2d^N - 3^{2\ell} = 1$, or equivalently, 
$$
2d^N - (3^\ell)^2 = 1.
$$
By Lemma~\ref{Cohn}, there are no integer solutions to this expression with $d \geq 2$ and  $N \geq 3$. Finally since we know from above that $N$ must be odd, $N = 2$ can also not yield a valid solution, so there are no integer solutions to the equation $d^N = (3^k + 1)/2$ with $d \geq 2$ and $N \geq 2$.

\

\textbf{Case 3:} Suppose $d^N = (2^k - (-1)^k)/3$. We will show that there are no integer solutions in $d, N,$ and $k$ with $d,N \geq 2$. On one hand, if $k$ is even, then $k = 2\ell$ for some integer $\ell$. Thus,
$$
d^N = \frac{2^k - 1}{3} = \frac{2^{2\ell} - 1}{3} = \frac{4^\ell - 1}{4-1}.
$$
By Lemma~\ref{BM02_thm3}, there are no integer solutions to this equation with $\ell \geq 3$. It is straightforward to check that $\ell \in \{1,2\}$ do not yield valid solutions either. On the other hand, if $k$ is odd, then 
$$
d^N = \frac{2^k + 1}{3} = \frac{2^k + 1}{2 + 1}.
$$
By Lemma~\ref{BM02_thm2}, there are no integer solutions to this equation with $k \geq 5$. Since $d,N \geq 2$, it is straightforward to check that $k = 3$ does not yield a valid solution either.

Altogether, the only valid solution to the premise of Lemma~\ref{lem:lie-type-case} derives from Case 1, where $N = 2$ and $d \in \{2,11\}$. This is the desired result.
\end{proof}

\section{Jeandel's Construction}
\label{appendix:jeandel}

Here, we review the main idea in Jeandel's paper \cite{Jea04}, which not only establishes the existence of eventually universal $n$-qu$d$it gate sets $\Gamma$ with $\K(\Gamma) > n$, but which also gives a general method to construct $n$-qubit ($d = 2$) gate sets $\Gamma$ for which $2n - 5 \leq \K(\Gamma) \leq 2n - 3$.

Let $\Omega$ be a universal $2$-qubit gate set with elements $A_1, A_2, \dots, A_{|\Omega|}$, and suppose that for all $i$, $A_i^2 = I$, where $I$ is the identity operation. For any positive integer $k \geq 2$, we define a $(k+2)$-qubit gate set $\Gamma$ implicitly as follows: $B_{k,i} \in \Gamma$ if and only if for all $\ket{t} \in \CC^{4}$ and all $\ket{c} \in \CC^{2^k}$,
$$
B_{k,i} (\ket{t} \otimes \ket{c})
=
\begin{cases} 
(A_i \ket{t}) \otimes \ket{c} & \text{if }\ket{c} \in \big\{\ket{0}^{\otimes k}, \ket{1}^{\otimes k}\big\}, \\
\ket{t} \otimes \ket{c} & \text{otherwise}. 
\end{cases}
$$
Conceptually, $B_{k,i} \in \Gamma$ provided it applies $A_i \in \Omega$ to the first two qubits if and only if the latter $k$ qubits are either all $\ket{0}$ or all $\ket{1}$.

We claim that $\Gamma$ is not universal on fewer than $2k - 2$ qubits. To see this, consider the action of $\Gamma$ on the subspace spanned by $\ket{0}^{k-1} \otimes \ket{1}^{k-1}$ up to permutations of the qubits (i.e., any computational basis state with $k-1$ $\ket{0}$'s and $k-1$ $\ket{1}$'s). By construction, no subset of $k$ qubits satisfies the control conditions of the individual gates $B_{k,i}$, so every such gate leaves this subspace invariant. As such, $\Gamma$ is not universal on $2k - 2$ qubits.

On the other hand, at least for some specific values of $k$, $B_{k,i}$ \emph{is} universal on $2k+1$ qubits. To see this, consider a set of $2k+1$ qubits, and suppose that we want to apply $A_i$ to the first two qubits. To do this, we need to act $B_{k,i}$ on a $k+2$ qubit subsystem that includes the first two qubits, as well as $k$ control qubits. These control qubits are selected as a subset of the remaining $2k-1$ qubits. However, we do not know the state of those $2k-1$ qubits, and so a priori we do not know which subset of $k$ qubits to select as the controls. So, instead of selecting any particular subset, we will simply try every subset, and apply the $B_{k,i}$ gate $\binom{2k-1}{k}$ times. The question, then, is how many times is the gate $A_i$ applied to the first 2 qubits? We will show that for an appropriate choice of $k$, no matter the state over the $2k-1$ qubits, $A_i$ will be applied exactly once on the first two qubits.

Let $\ket{\psi}$ be any computational basis state on $2k-1$ qubits. Then $\ket{\psi}$ contains at least $k$ tensor factors of either $\ket{0}$ or $\ket{1}$. Without loss of generality, suppose that there are that there are $k + q$ tensor factors of $\ket{0}$, where $0 \leq q \leq k-1$. Then, the number of times that the gate $A_i$ is applied on the first two qubits is $\binom{k+q}{k}$. The key observation is that if $k = 2^j$ and $q \leq k-1$, then $\binom{k+q}{k}$ is odd. (This follows from inducting on $j$.)

Thus, by the reasoning above, for any computational basis state $\ket{\psi}$ on $2k-1$ qubits, applying $B_{k,i}$ $\binom{2k-1}{k}$ times, once for every subset of the $k$ control qubits, results in $A_i$ being applied exactly once to the first two qubits, provided $k$ is a power of two. Since this will be true for any computational basis state over the $2k-1$ qubits, this construction will hold for any superposition state over them as well. 

If we now want to apply the gate $A_i$ on a different set of two qubits, we just separate those two qubits as ``the first two" and repeat the exact same process as described above. As such, the gates in $\Omega$ can be applied on \emph{any} of the $2k+1$ qubits, which proves that $\Gamma$ is universal on $2k + 1$ qubits.

Altogether, then, we have shown that there exists a $(k+2)$-qubit gate set $\Gamma$ for which $2k - 1 \leq \K(\Gamma) \leq 2k + 1 $. With $n = k + 2$, we have equivalently shown the existence of an $n$-qubit gate set $\Gamma$ for which $2n - 5 \leq \K(\Gamma) \leq 2n - 3$.

We note that by basically the same argument above, one can establish the existence of an $n$-qu$d$it gate $\Gamma$ for which $\K(\Gamma) \geq dn - 2d - 1$. Therefore, there exist $n$-qu$d$it gate sets that are not universal on $n$-qu$d$its. However, it remains to show that such gates sets are also eventually universal. Unfortunately, the upper bound argument above does not obviously generalize to qu$d$it systems. This is because the proof for the upper bound uses the fact that when $d = 2$, one can count the number of times a gate $B_{k,i}$ is ``activated" via a single binomial coefficient. When $d > 2$, however, this count is a complicated sum of binomial coefficients whose parity is not easily deducible. Thus, the argument does not go through, at least not obviously. Still, we conjecture that for all $d > 2$, there exists an \emph{eventually universal} $n$-qu$d$it gate set $\Gamma$ for which $dn - 2d - 1 \leq \K(\Gamma)$, or something morally equivalent. We leave this as an open question.

\section*{Acknowledgments}

The authors thank Gabor Ivanyos, Jonathan Rosenberg, Luke Schaeffer, Amin Gholampour, Ian Teixeira, and Adam Bouland for several useful discussions.

\bibliographystyle{amsplain}
\bibliography{references}

\end{document}